\definecolor{red}{rgb}{1,0,0}
\definecolor{green}{rgb}{0,1,0}
\definecolor{blue}{rgb}{0,0,1}
\newtheorem{proposition}{Proposition}[section]
\DeclareMathOperator{\Tr}{Tr}
\DeclareMathOperator{\Det}{Det}
\DeclareMathOperator{\sgn}{sgn}
\begin{document}

\title{From empirical data to continuous Markov processes: a systematic
       approach}

\author{Pedro Lencastre}
\affiliation{Mathematical Department, FCUL, University of Lisbon, 1749-016 Lisbon, Portugal}
\author{Frank Raischel}
\affiliation{Instituto Dom Luiz, University of Lisbon, 1749-016 Lisbon, Portugal}
\author{Tim Rogers}
\affiliation{Centre for Networks and Collective Behaviour, Department of Mathematical Sciences, University of Bath, Claverton Down, BA2 7AY, Bath, UK}
\author{Pedro G.~Lind}
\affiliation{ForWind - Center for Wind Energy Research, Institute of Physics,
             Carl-von-Ossietzky University of Oldenburg, DE-26111 Oldenburg, 
             Germany}
\affiliation{%
         Institut f\"ur Physik, Universit\"at Osnabr\"uck,
         Barbarastrasse 7, 49076 Osnabr\"uck, Germany}
\date{\today}

\begin{abstract}
We present an approach for testing for the existence of continuous
generators of discrete stochastic transition matrices.
Typically, the known approaches to ascertain the existence of continuous 
Markov processes are based in the assumption that only time-homogeneous
generators exist. Here, a systematic extension to time-inhomogeneity 
is presented, based in new mathematical propositions incorporating 
necessary and sufficient conditions, which are then implemented 
computationally and applied to numerical data.
A discussion concerning the bridging between rigorous 
mathematical results on the existence of generators to its
computational implementation. 
Our detection algorithm shows to be effective in more than 
$80\%$ of tested matrices, 
typically $90\%$ to $95\%$, and for those an 
estimate of the (non-homogeneous) generator matrix follows.
We also solve the embedding problem analytically for the particular case of
three-dimensional circulant matrices.
Finally, a discussion of possible applications of our framework to 
problems in different fields is briefly addressed.
\end{abstract}

\pacs{02.50.Ga,  
      05.10.Gg,  
      02.10.Yn,  
      89.65.Gh}  

\keywords{Continuous Markov Processes, Embedding Problem,
Inhomogeneous Generators, Master Equation}  

\maketitle


\section{Motivation}

While models describing the evolution of a set of variables
are typically continuous, observations and experiments retrieve
discrete sets of values. Therefore, to bridge between models and 
reality one has to know if it is reasonable to assume a continuous 
``reality'' underlying the discrete set of measurements.
When the evolution has a non negligible stochastic contribution,
one typically extracts from the set of measurements the distribution
$\vec{P}(X,t-\tau)$ of the observed values $X(t-\tau)$, from which the 
probability density function (PDF) can be inferred. By knowing the 
distribution
$\vec{P}(X,t)$ at a future time $t$, one is then able to
define a transition matrix $\mathbf{T}(t,\tau)$ that satisfies:
\begin{equation} 
\vec{P}(X,t) = \mathbf{T}(t,\tau)\vec{P}(X,t-\tau) ,
\label{transeq}
\end{equation} 
if we know the fraction of transitions $T_{kj}$ from each observed value 
$X_j(t-\tau)$ at time $t-\tau$ to a value $X_k(t)$ at time $t$.
The transition matrix $\mathbf{T}(t,\tau)$ has all its elements
$T_{kj}$ in the interval $[0,1]$, has row-sums one, $\sum_{k} T_{kj}=1$, 
and has non-negative entries, $T_{kj}\ge 0$. 

In this paper we address the problem of determining whether or not the 
evolution of a system is governed by a time-continuous Markov master 
equation. 
This problem is usually called the embedding problem~\cite{Elfving1937}. 
Time-continuous Markov processes, have particular mathematical
properties, namely they memoryless
stochastic processes: the probability of transition between states $X(t)$
and $X(t+\tau)$ does not depend on the states of the system for times 
previous to $t$, for any $\tau>0$.
If the stochastic process is time-continuous and Markovian, 
than the transition matrix can be defined for infinitely small $\tau$,
obeying an equation of the form
\begin{equation}
\frac{d\mathbf{T}(t,\tau)}{d\tau}=\mathbf{Q}(t)\mathbf{T}(t,\tau)\, ,
\label{MC1}
\end{equation}
where $\mathbf{Q}(t)$ is called the generator matrix of the process, 
having zero row-sums and non-negative off-diagonal entries. 
Notice that, $\mathbf{T}(t,\tau)$ is a transition matrix for all $t$
and $\tau$, i.e.~with non-negative real elements and unity row-sums,
if and only if it obeys Eq.~(\ref{MC1}) for some 
$\mathbf{Q}(t)$\cite{Kingman1962}.

Both equations above allow us to write the continuous-time evolution of 
a PDF.
In other words, the time evolution of such PDF can be described by a master 
equation in continuous time. 
The Master Equation Approach is a fundamental tool in Statistical Physics
used to derive important results in Thermodynamics\cite{esposito2010three} 
and in several interdisciplinary 
applications\cite{privman2005nonequilibrium}.

The transition matrix $T(t,\tau)$, solution of Eq.~(\ref{MC1}),
defines the evolution equation, Eq.~(\ref{transeq}), of the PDF.
Thus, the entries $Q_{kj}$ of the generator matrix represent the 
transition rate between states $j$ and $k$ at time $t$.
Time-continuity is a property that results from the fact that
all entries of $\mathbf{Q}(t)\equiv \mathbf{Q}$, 
i.e.~all transition rates, are finite.
The general solution of Eq.~(\ref{MC1}) yields the relation between
the empirical transition matrix and the ``continuous'' generator 
which, in the particular case of a time-homogeneous transition matrix, 
has the form
\begin{equation}
\mathbf{T}(t,\tau) \equiv \mathbf{T}(\tau) = \exp{\left (
                          \mathbf{Q}\tau
                          \right )} ,
\label{homogeneous_solution}
\end{equation} 
for all times $t$.
In general, 
the embedding problem reduces to the problem of being able to write
the transition matrix $\mathbf{T}(t,\tau)$ as solution of Eq.~(\ref{MC1})
and typically one considers the particular case of a time-homogeneous 
solution, Eq.~(\ref{homogeneous_solution}).

While time-homogeneity is a useful common assumption, 
it is in several cases too restrictive. 
Assuming time-homogeneity has the advantage of knowing all 
future evolution of a time-homogeneous Markov process from the 
law of the change of system's configuration 
in two distinct instants (see Eq.~(\ref{homogeneous_solution})),
but simultaneously one is not able to address more realistic cases of 
non-stationary systems. 
Previous progress in this topic has been made recently.
Shintani and Shinomoto have examined an optimized Bayesian rate estimator
in cases where the probability density function is not constant in 
time\cite{shintani2012}.

In this scope, there are three main reasons for considering an
empirical transition matrix to not be time-homogeneous embeddable.
The first one is when the underlying process is not Markovian.
Such scenario was previously addressed by us\cite{lencastre1,lencastre2}.
One second reason is the statistical error any empirical data set is 
subjected to. 
Typically, one defines for these cases an interval of confidence 
(a distance) beyond which embeddability is rejected.
The third reason is, of course, that the underlying process is itself
not time-homogeneous. In this case, there is no time-homogeneous
generator, but there is sill the chance that an inhomogeneous generator
exists.

In this paper, we address analytically and numerically the case of 
time-inhomogeneous generators and test their implementation in one 
framework to address synthetic numerical data, dealing with statistical
error of transition matrices.
We will also review the time-homogeneous embedding problem, introduced 
in 1937 by Elfving\cite{Elfving1937}, providing an analytical example 
in three-dimensions.

We start in Sec.~\ref{sec:embedding_soa} by describing the standard 
time-homogeneous problem with the main mathematical theorems that give the 
necessary and sufficient conditions for a generator matrix to exist.
In Sec.~\ref{sec:example} we illustrate this standard time-homogeneous
embedding problem by applying the results to the specific case of a 
circulant transition matrix.
Sections \ref{sec:inhomogeneous} and \ref{sec:implementation} are
the heart of this paper, the former establishes the main mathematical 
theorems that are still valid for the general case of inhomogeneous
generators and the latter describes their implementation in a
framework that is then tested with synthetic data.
Finally, discussions and conclusions are given in 
Sec.~\ref{sec:conclusions}.

\section{The homogeneous embedding problem}
\label{sec:embedding_soa}

The question of knowing if a time-homogeneous generator $\mathbf{Q}$ 
(see Eq.~(\ref{homogeneous_solution})) exists is known as homogeneous 
embedding problem\cite{Elfving1937} and, from the mathematical point 
of view is 
currently an open problem for matrices with dimension $n\geq 3$. 
The problem in dimension two was solved in 1962 by Kingman\cite{Kingman1962}, 
who proved that, for $n=2$, a matrix is embeddable if and only 
if its determinant is positive. 
More recently, developments in three dimensions were done with the study of 
matrices with repeated negative eigenvalues\cite{3x3generator}. 

Part of the difficulty when addressing the embedding problem
arises from the fact that the logarithm of a matrix 
is, in general, not unique. This is crucial when deriving a generator
$\mathbf{Q}$, by inverting Eq.~(\ref{homogeneous_solution}).
Indeed, the logarithm of a matrix has counter-intuitive 
properties, namely: 
\begin{itemize}
\item[(i)] The product of two embeddable transition
matrices $ \mathbf{T}_1$ and $ \mathbf{T}_2$ is also a transition
matrix not necessarily embeddable.
\item[(ii)] Having two transition embeddable matrices with
  generators $\mathbf{Q}_1$ and $ \mathbf{Q}_2$, if their product is embeddable 
  then its generator is not necessarily $ \mathbf{Q}_1+
  \mathbf{Q}_2$, unless the transition matrices commute.  
\item[(iii)] It is possible that the product of two matrices, 
$\mathbf{T}_1 \mathbf{T}_2$, is embeddable, but the product 
$\mathbf{T}_2 \mathbf{T}_1$ is not.
\end{itemize}

Since the logarithm of a matrix is not unique, one defines the so-called
principal logarithm of one matrix $\mathbf{T}$ as\cite{Davies2010}
\begin{equation}
\log \mathbf{T} = \frac{1}{2 \pi i} \int_{\gamma} 
        \log z \left ( z\mathbf{I} - \mathbf{T} \right )^{-1} dz\, ,
\label{principallog}
\end{equation}
where $\gamma$ is a path in the complex plane which does not intersect
the negative real semi-axis and encloses all eigenvalues of 
$\mathbf{T}$. 
Computationally, one uses the Taylor expansion of the logarithmic function,
yielding
\begin{equation}
\log \mathbf{T} = \sum_{n=1}^{\infty} (-1)^{n+1} 
\frac{(\mathbf{T}-\mathbf{I})^{n}}{n}\, ,
\label{logseries}
\end{equation}
which is the the principal branch of the complex logarithm in
Eq.~(\ref{principallog}) or other numerical methods, such as
Schur decomposition.

To ascertain if the principal logarithm is computable 
one has the following proposition\cite{Israel2001}:
\begin{proposition}
Let $S = \max{\left (
              (a-1)^{2}+b^{2}
              \right )}$ 
where $a$ and $b$ are real coefficients of an eigenvalue $\lambda=a + ib$ 
of the transition matrix $\mathbf{T}$.
If $S < 1$ then the polynomial series of the $\log{\mathbf{T}}$,
Eq.~(\ref{logseries}), converges to a matrix with zero row-sums.
\label{suficient-condition-1}
\end{proposition}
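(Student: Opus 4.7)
The plan is to split the claim into two independent pieces: convergence of the series, and the zero row-sum property of its limit. The hypothesis $S<1$ has a clean geometric reading: since $S=\max_\lambda |\lambda-1|^2$ as $\lambda=a+ib$ runs over the spectrum of $\mathbf{T}$, the assumption is equivalent to saying that every eigenvalue of $\mathbf{T}$ lies in the open unit disk centered at $1$. Equivalently, the spectral radius satisfies $\rho(\mathbf{T}-\mathbf{I})<1$, which is the natural bridge to bring in the scalar identity $\log(1+x)=\sum_{n\geq 1}(-1)^{n+1}x^n/n$ that has radius of convergence $1$.

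For convergence I would invoke the standard lemma that for any $\epsilon>0$ there exists a submultiplicative matrix norm $\|\cdot\|_\epsilon$ with $\|\mathbf{T}-\mathbf{I}\|_\epsilon\leq \rho(\mathbf{T}-\mathbf{I})+\epsilon$ (obtained, e.g., by rescaling coordinates adapted to the Jordan form). Choosing $\epsilon$ small enough that $\|\mathbf{T}-\mathbf{I}\|_\epsilon<1$, the comparison
\begin{equation*}
\sum_{n=1}^{\infty}\frac{\|(\mathbf{T}-\mathbf{I})^n\|_\epsilon}{n}\;\leq\;\sum_{n=1}^{\infty}\frac{\|\mathbf{T}-\mathbf{I}\|_\epsilon^{\,n}}{n}<\infty
\end{equation*}
yields absolute convergence of the partial sums in $\|\cdot\|_\epsilon$, hence in any norm by equivalence, producing a well-defined limit matrix which we may call $\log\mathbf{T}$.

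For the zero row-sum property I would exploit the fact that $\mathbf{T}$ is stochastic: denoting by $\mathbf{1}$ the all-ones column vector, $\mathbf{T}\mathbf{1}=\mathbf{1}$, so $(\mathbf{T}-\mathbf{I})\mathbf{1}=\mathbf{0}$ and by iteration $(\mathbf{T}-\mathbf{I})^n\mathbf{1}=\mathbf{0}$ for every $n\geq 1$. Consequently each partial sum $S_N=\sum_{n=1}^{N}(-1)^{n+1}(\mathbf{T}-\mathbf{I})^n/n$ satisfies $S_N\mathbf{1}=\mathbf{0}$; passing to the limit and using continuity of $A\mapsto A\mathbf{1}$ gives $(\log\mathbf{T})\mathbf{1}=\mathbf{0}$, i.e.\ the rows of $\log\mathbf{T}$ sum to zero.

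The only delicate point is the existence of the adapted norm used in the convergence step; everything else is bookkeeping. If one prefers to avoid that lemma, an equally clean alternative is to work in the Jordan basis of $\mathbf{T}-\mathbf{I}$ and bound $\|(\mathbf{T}-\mathbf{I})^n\|$ by $C\,n^{d-1}\rho(\mathbf{T}-\mathbf{I})^n$, where $d$ is the size of the largest Jordan block; the factor $n^{d-1}/n$ is still summable because $\rho(\mathbf{T}-\mathbf{I})<1$. Either route closes the proof without requiring any new information about $\mathbf{T}$ beyond its stochasticity and the spectral hypothesis.
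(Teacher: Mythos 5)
Your argument is correct and complete. Note that the paper itself offers no proof of this proposition --- it is quoted from Israel et al.\ (Ref.~\cite{Israel2001}) --- so there is nothing internal to compare against; your route (identifying $S<1$ with $\rho(\mathbf{T}-\mathbf{I})<1$, invoking an adapted submultiplicative norm or the Jordan-block bound $C\,n^{d-1}\rho^{n}$ to get absolute convergence of $\sum_n \|(\mathbf{T}-\mathbf{I})^n\|/n$, and then killing the row-sums term-by-term via $(\mathbf{T}-\mathbf{I})\mathbf{1}=\mathbf{0}$) is exactly the standard argument given in that reference. The only cosmetic mismatch is that the paper's stated normalization $\sum_k T_{kj}=1$ is literally a column-sum condition even though it is called a row-sum; your proof uses $\mathbf{T}\mathbf{1}=\mathbf{1}$, consistent with the row-stochastic convention the paper clearly intends, and the argument transposes verbatim if the other convention were meant.
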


While the existence of the logarithm of a transition matrix is
necessary for our purposes, it does not solve the full embedding
problem. 
One must assure further that a valid generator exists, 
i.e.~a matrix with non-negative off-diagonal entries and zero 
row-sums.
Moreover, it is also true that, if $S>1$, one cannot claim that 
$\mathbf{T}$ has no generator: another generator may exist in a 
different branch.

We are interested in the general case of knowing if there is a valid 
generator, and if there is, to find it.
For that, we need to solve the full embedding problem. 
The full embedding problem comprises a set of propositions which 
are separated in four different categories:
\begin{itemize}
\item[(A)] Conditions for the convergence of the principal logarithm, 
as presented above in Proposition \ref{suficient-condition-1},
that determine if the matrix defined in Eq.~(\ref{logseries})
has finite entries $Q_{kj}$.  
\item[(B)] Necessary conditions for the existence of a generator.
\item[(C)] Sufficient conditions for the existence of a generator.
\item[(D)] Uniqueness conditions of the generator for properly defining
the underlying continuous process.
\end{itemize}

The conditions for the convergence of the principal logarithm are mainly
included in Proposition \ref{suficient-condition-1}.
Most of the other known results, comprehending categories (B), (C) and
(D) are enumerated in the papers by Israel and co-workers\cite{Israel2001} 
and Davies\cite{Davies2010}. In the following we present an
overview of the most relevant propositions.

Regarding the necessary conditions, important for establishing that a 
generator cannot exist, there are three highly used propositions easy to
implement\cite{Israel2001}. The first one is:
\begin{proposition}
If a transition matrix $\mathbf{T}$ obeys one of the following conditions
\begin{itemize}
\item[a)] $\Det{(\mathbf{T})} \leq 0$,
\item[b)] $\Det{(\mathbf{T})} > \prod_{i} T_{ii}$,
\item[c)] $T_{ij}=0$ and there is an integer $n$ such that 
          $(\mathbf{T}^{n})_{ij} \neq 0$,
\end{itemize}
then no valid generator exists.
\label{ncondition1}
\end{proposition}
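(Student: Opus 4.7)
My plan is to treat the three conditions independently, each proven by contraposition: I assume that a valid generator $\mathbf{Q}$ exists (so $\mathbf{T}=\exp(\mathbf{Q})$ with $\mathbf{Q}$ having zero row sums and non-negative off-diagonal entries) and derive the negation of each condition.

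For part (a), the approach is essentially a one-line consequence of a standard matrix identity: $\Det(\mathbf{T})=\Det(\exp(\mathbf{Q}))=\exp(\Tr(\mathbf{Q}))$. Since $\Tr(\mathbf{Q})$ is a sum of real numbers, the exponential is strictly positive, contradicting $\Det(\mathbf{T})\leq 0$.

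For part (b), I would use the same trace-determinant identity but combined with a diagonal lower bound. Writing $\mathbf{Q}=\mathbf{D}+\mathbf{N}$ with $\mathbf{D}=\mathrm{diag}(Q_{ii})$ (non-positive entries by the zero row-sum property) and $\mathbf{N}$ the off-diagonal part (non-negative entries), one can show $T_{ii}=(e^{\mathbf{Q}})_{ii}\geq e^{Q_{ii}}$. The cleanest derivation is probabilistic: $T_{ii}$ is the probability of being in state $i$ at time $1$ starting from $i$, which is at least the probability of never leaving $i$, namely $e^{Q_{ii}}$. Alternatively, one can invoke the Lie-Trotter product formula $e^{\mathbf{D}+\mathbf{N}}=\lim_k (e^{\mathbf{D}/k}e^{\mathbf{N}/k})^k$ and note that each factor has non-negative entries with diagonal at least $e^{D_{ii}/k}$. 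Taking the product over $i$ yields $\prod_i T_{ii}\geq \prod_i e^{Q_{ii}}=e^{\Tr(\mathbf{Q})}=\Det(\mathbf{T})$, contradicting (b).

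For part (c), the argument is structural. I would show that if $T_{ij}=0$ for $i\neq j$, then there is no directed path $i\to i_1\to\cdots\to i_{k-1}\to j$ in the graph whose edges are the pairs $(k,l)$ with $Q_{kl}>0$. The implication uses the Dyson/Volterra expansion $e^{t\mathbf{Q}}=e^{t\mathbf{D}}+\sum_{k\geq 1}\int_{0<s_1<\cdots<s_k<t} e^{(t-s_k)\mathbf{D}}\mathbf{N}\cdots\mathbf{N}e^{s_1\mathbf{D}}\,ds_1\cdots ds_k$, whose $(i,j)$ entry is a sum of strictly positive integrals over paths, so vanishing at $t=1$ forces the absence of every path. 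But then the same expansion at integer time $n$ also gives $(e^{n\mathbf{Q}})_{ij}=(\mathbf{T}^n)_{ij}=0$, contradicting the assumption that $(\mathbf{T}^n)_{ij}\neq 0$.

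The main obstacle is the inequality in (b), because one needs a careful argument that $(e^{\mathbf{Q}})_{ii}\geq e^{Q_{ii}}$ without losing positivity of the off-diagonal contributions; the probabilistic interpretation makes this immediate, but an algebraic proof needs Lie-Trotter or a Dyson-series bound. Once that is in hand, parts (a) and (c) follow directly from the trace-determinant identity and the path-expansion structure, respectively.
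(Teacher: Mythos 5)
Your proposal is correct, and it is in fact more complete than what the paper provides: the paper does not prove Proposition \ref{ncondition1} at all, offering only heuristic remarks (the trace--determinant identity for a), the phrase ``related with the definition of determinant'' for b), and an informal counting-of-transitions story for c)) before deferring to Israel et al.\ for the full proof. Your part (a) coincides with the paper's hint, $\Det(\mathbf{T})=\exp(\Tr(\mathbf{Q}))>0$. For part (b) you supply the actual missing ingredient, the diagonal bound $T_{ii}\geq e^{Q_{ii}}$; both of your routes to it are sound --- the probabilistic one (probability of never leaving state $i$ over a unit holding time) is immediate once $\mathbf{Q}$ is a bona fide generator, and the Lie--Trotter route works because each factor $e^{\mathbf{D}/k}e^{\mathbf{N}/k}$ is entrywise non-negative with $(i,i)$ entry at least $e^{Q_{ii}/k}$, so $((e^{\mathbf{D}/k}e^{\mathbf{N}/k})^k)_{ii}\geq e^{Q_{ii}}$; combining with $\prod_i e^{Q_{ii}}=e^{\Tr(\mathbf{Q})}=\Det(\mathbf{T})$ gives the contradiction. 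For part (c) your Dyson/Volterra expansion argument is the rigorous version of the paper's heuristic: every term of the series has non-negative $(i,j)$ entry, and a term is strictly positive exactly when the corresponding path through the off-diagonal graph of $\mathbf{Q}$ exists, so $T_{ij}=0$ kills every path and hence forces $(\mathbf{T}^n)_{ij}=(e^{n\mathbf{Q}})_{ij}=0$ for all $n$. The only cosmetic point is that condition c) as stated does not exclude $i=j$; that case is subsumed by your bound $T_{ii}\geq e^{Q_{ii}}>0$, so it is worth a one-line remark but is not a gap.
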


For $\mathbf{Q} = \log{(\mathbf{T})}$, the equality 
\begin{equation}
\Tr{(\mathbf{Q})} = \log{(\Det{(\mathbf{T})})}
\label{det-exp=exp-tr}
\end{equation}
gives the right insight 
to the property a) in Prop.~\ref{ncondition1} 
since the logarithm of a real number is only defined 
for positive values. 
Property b) is related with the definition of determinant.
As for property c), suppose that a minimum of $t$ transitions are 
needed to go from $i$ to $j$. If the processes is not time-continuous
and transitions do not occur more than once in a time period $\Delta t$, 
then an entity can only go from $i$ to $j$ in a number of transitions
larger than $(t-1)/\Delta t$. 
This naturally is not true for time-continuous processes, 
since there is always a non-zero probability of making $t$ transitions 
between different states over any time-window. 
For a complete proof of Prop.~\ref{ncondition1} see Ref.~\cite{Israel2001}.

The second proposition is:
\begin{proposition}
For a transition matrix $\mathbf{T}$ with distinct eigenvalues, 
a generator $\mathbf{Q}$ exists only if, given any eigenvalue of 
$\mathbf{Q}$ in the form $\lambda=a + ib$, it satisfies the condition 
$|b| \leq \vert \log(\Det{\mathbf{T}})\vert$.
\label{necessary-condition-eigen-3}
\end{proposition}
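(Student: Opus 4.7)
The plan is to derive the bound by applying the Gershgorin circle theorem to $\mathbf{Q}$ and exploiting the very specific structure imposed on $\mathbf{Q}$ by the generator conditions. Because $\mathbf{T}$ has distinct eigenvalues, it is diagonalizable and so any generator $\mathbf{Q}$ with $\exp(\mathbf{Q})=\mathbf{T}$ is also diagonalizable in the same basis, with eigenvalues equal to branches of the logarithm of the eigenvalues of $\mathbf{T}$; in particular every eigenvalue of $\mathbf{Q}$ is well defined and finite.

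The first step is to record the algebraic consequences of $\mathbf{Q}$ being a valid generator. Since the row-sums of $\mathbf{Q}$ vanish and the off-diagonal entries are non-negative, one has $Q_{ii}=-\sum_{j\neq i}Q_{ij}\leq 0$ and $\sum_{j\neq i}|Q_{ij}|=-Q_{ii}=|Q_{ii}|$. Summing over $i$ gives $|\mathrm{Tr}(\mathbf{Q})|=\sum_{i}|Q_{ii}|$, and combining with Eq.~(\ref{det-exp=exp-tr}) and the fact that $0<\Det(\mathbf{T})\leq 1$ for a stochastic matrix (if $\Det(\mathbf{T})\leq 0$ no generator exists by Prop.~\ref{ncondition1}a) yields $|Q_{ii}|\leq\sum_{k}|Q_{kk}|=-\mathrm{Tr}(\mathbf{Q})=|\log\Det(\mathbf{T})|$ for every index $i$.

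The second step is to locate the eigenvalues of $\mathbf{Q}$. By Gershgorin's theorem, for each eigenvalue $\lambda=a+ib$ there exists an index $i$ such that
\begin{equation}
|\lambda-Q_{ii}|\leq\sum_{j\neq i}|Q_{ij}|=|Q_{ii}|.
\end{equation}
Squaring gives $(a-Q_{ii})^{2}+b^{2}\leq Q_{ii}^{2}$, which immediately forces $|b|\leq|Q_{ii}|$ (and incidentally $a\leq 0$, consistent with $|\mu|\leq 1$ for the eigenvalues $\mu=e^{\lambda}$ of the stochastic matrix $\mathbf{T}$). Combining this with the bound from the previous step, $|Q_{ii}|\leq|\log\Det(\mathbf{T})|$, concludes $|b|\leq|\log\Det(\mathbf{T})|$, as claimed.

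I do not expect a major obstacle here: the proof is essentially a packaging of Gershgorin plus the sign pattern of $\mathbf{Q}$. The most delicate point is making the inequality $|Q_{ii}|\leq|\mathrm{Tr}(\mathbf{Q})|$ honest — it uses that \emph{all} diagonal entries are non-positive, not just the distinguished one — together with the observation that the statement concerns eigenvalues of a \emph{fixed} generator $\mathbf{Q}$, so the multi-valuedness of the matrix logarithm does not enter: once a branch is selected the imaginary parts $b$ are determined, and it is only for these specific values that the bound must hold.
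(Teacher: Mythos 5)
Your proof is correct, but it reaches the bound by a different route than the paper. The paper's argument shifts the generator to $\mathbf{Q}'=\mathbf{Q}-k\mathbf{I}$ with $k=\Tr(\mathbf{Q})=\log\Det(\mathbf{T})\leq 0$, observes that $\mathbf{Q}'$ is entrywise non-negative with constant row-sums $-k$, and invokes Perron--Frobenius to conclude that every eigenvalue of $\mathbf{Q}'$ has modulus at most $-k$; since the shift preserves imaginary parts, $|b|\leq|\lambda-k|\leq -k=|\log\Det(\mathbf{T})|$. You instead apply Gershgorin directly to $\mathbf{Q}$, getting $|\lambda-Q_{ii}|\leq -Q_{ii}$ for some $i$, and then relax $-Q_{ii}\leq-\Tr(\mathbf{Q})$ using the non-positivity of all diagonal entries. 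The two regions are nested --- each Gershgorin disc $D(Q_{ii},-Q_{ii})$ sits inside the single Perron--Frobenius disc $D(k,-k)$, since $|z-k|\leq|z-Q_{ii}|+(Q_{ii}-k)\leq -k$ --- so both yield the same final inequality. Your route is more elementary (no Perron--Frobenius machinery), gives the sharper intermediate per-row bound $|b|\leq -Q_{ii}$ before the relaxation, and cleanly delivers $\Re\lambda\leq 0$ as a by-product; the paper's route identifies the spectral radius of the shifted matrix exactly (constant row-sums) in one stroke. Two small remarks: the opening paragraph about simultaneous diagonalizability is true but unnecessary, since Gershgorin needs no spectral hypothesis on $\mathbf{T}$ --- indeed neither proof actually uses the ``distinct eigenvalues'' assumption; and your careful tracking of signs avoids the slips in the paper's own sketch, where ``not smaller'' should read ``not larger'' and the final chain should be $-k\geq|\lambda'|\geq|b|$.
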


Proposition \ref{necessary-condition-eigen-3} is related to the previous
one. Consider $\mathbf{T}$ embeddable and define 
$k \equiv \Tr{(\mathbf{Q})} = \log(\Det{(\mathbf{T})})$ (see
Eq.~(\ref{det-exp=exp-tr})). 
All entries of matrix $\mathbf{Q}^{\prime} = \mathbf{Q} - \mathbf{1}k$ are
non-negative and its row-sums are equal to $-k$. 
From Perron-Frobenius Theorem we know that all eigenvalues of
matrix $\mathbf{Q}^{\prime}$ have an absolute value not smaller than $-k$. 
Since $\lambda = a + ib$ is an eigenvalue of $\mathbf{Q}$, then 
$\lambda^{\prime} = (a-k) + ib$ is an eigenvalue of $\mathbf{Q}^{\prime}$,
yielding $k > \vert \lambda^{\prime} \vert > \vert b\vert$.

A third necessary condition defines the region of the complex plane 
that contains the eigenvalues of $\mathbf{T}$, if a generator exists:
\begin{proposition} 
If $\mathbf{T}$ is a $n\times n$ matrix and has a generator, then its 
eigenvalue spectrum is given by $\lambda_k=r_k\exp{(i\theta_k)}$,
where $-\pi \leq \theta \leq \pi$ and 
\begin{equation}
r \leq \exp{\left (
                 -\theta \tan{\left ( \tfrac{\pi}{n} \right )}
                 \right       )} .
\label{eq:nEigenprop1}
\end{equation}
\label{nEigenprop1}
\end{proposition}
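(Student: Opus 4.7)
The plan is to pull the bound (\ref{eq:nEigenprop1}) back through the matrix exponential to a cone condition on the spectrum of the generator $\mathbf{Q}$, and then to obtain that cone condition from the classical description of the spectra of stochastic matrices. Since $\mathbf{T}$ is embeddable, $\mathbf{T}=\exp(\mathbf{Q})$ for some generator $\mathbf{Q}$, so each eigenvalue of $\mathbf{T}$ is of the form $\lambda=e^{\mu}$ with $\mu=\alpha+i\beta$ an eigenvalue of $\mathbf{Q}$. Writing $\lambda=re^{i\theta}$ with $-\pi\le\theta\le\pi$ gives $r=e^{\alpha}$ and $\beta=\theta+2\pi k$ for some integer $k$, in particular $|\beta|\ge|\theta|$. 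Hence (\ref{eq:nEigenprop1}) will follow once we prove that every eigenvalue $\mu$ of an $n\times n$ generator lies in the closed cone of half-aperture $\pi/2-\pi/n$ opening along the negative real axis, i.e.\ $\alpha\le -|\beta|\tan(\pi/n)$.

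To establish this cone condition I would use the standard scaling trick that turns $\mathbf{Q}$ into a one-parameter family of bona fide stochastic matrices. For any $c\ge\max_{i}|Q_{ii}|$, the matrix $\mathbf{R}_{c}=\mathbf{I}+\mathbf{Q}/c$ has non-negative entries (off-diagonals inherit from $\mathbf{Q}$, and the shifted diagonals $1+Q_{ii}/c$ are non-negative by the choice of $c$) and unit row-sums (since the rows of $\mathbf{Q}$ sum to zero), so it is an $n\times n$ stochastic matrix. Its spectrum is $\{1+\mu/c\}$ as $\mu$ runs over the spectrum of $\mathbf{Q}$, and by taking $c$ large these eigenvalues cluster arbitrarily close to the Perron point $\nu=1$.

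The third step invokes the classical description (due to Dmitriev--Dynkin and completed by Karpelevich) of the region $\mathcal{K}_{n}$ swept by eigenvalues of $n\times n$ stochastic matrices. The only local information I need is that, at the vertex $\nu=1$, $\mathcal{K}_{n}$ is contained in the wedge of half-aperture $\pi/2-\pi/n$ whose axis points in the $-1$ direction. Applying this to each $\nu_{c}=1+\mu/c$ for $c$ arbitrarily large forces $-\mu$ itself to lie in the same wedge, which is exactly the cone condition isolated above. Back-substituting $r=e^{\alpha}$ and $\theta=\beta$, together with the conjugate symmetry of the spectrum of the real matrix $\mathbf{T}$, then delivers the spiral bound (\ref{eq:nEigenprop1}).

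The principal obstacle is Step~3: the $\pi/n$ aperture at the Perron point is a non-trivial classical result. One would either quote Karpelevich's explicit description of $\mathcal{K}_{n}$, or give a self-contained argument by exhibiting the extremal stochastic matrices $\mathbf{R}_{\epsilon}=(1-\epsilon)\mathbf{I}+\epsilon\mathbf{C}$, where $\mathbf{C}$ is an $n$-cycle permutation, whose eigenvalues $1-\epsilon+\epsilon\, e^{2\pi i k/n}$ trace out the extremal boundary at $\nu=1$, and then matching this with the corresponding upper bound obtained by applying Perron--Frobenius to the non-negative matrix $\mathbf{Q}+c\mathbf{I}$. Once this geometric input is accepted, the scaling, exponentiation, and polar-form manipulations in Steps~1, 2 and 4 are routine.
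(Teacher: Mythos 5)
The paper itself offers no proof of Proposition~\ref{nEigenprop1}: it simply defers to Karpelevich and to Higham--Lin. Your argument is, in substance, the standard derivation that those references contain, so it is best classified as the same approach made explicit. The reduction is sound: the spectral mapping gives $\lambda=e^{\mu}$ with $\mu=\alpha+i\beta$ an eigenvalue of $\mathbf{Q}$, the principal argument satisfies $|\theta|\le|\beta|$, and the scaling $\mathbf{R}_c=\mathbf{I}+\mathbf{Q}/c$ correctly transports the question to the tangent cone of the stochastic eigenvalue region at the Perron point, which is the Dmitriev--Dynkin cone $\alpha\le-|\beta|\tan(\pi/n)$; back-substitution then yields Eq.~(\ref{eq:nEigenprop1}) (with $|\theta|$, consistent with conjugate symmetry). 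One caveat on your closing remarks: neither of the ``self-contained'' routes you sketch actually replaces the classical input. The extremal matrices $(1-\epsilon)\mathbf{I}+\epsilon\mathbf{C}$ show only that the $\pi/2-\pi/n$ aperture cannot be improved (sharpness), not that all eigenvalues lie inside it; and Perron--Frobenius applied to $\mathbf{Q}+c\mathbf{I}$ yields only the weaker modulus bound of the type in Prop.~\ref{necessary-condition-eigen-3}, not the angular constraint. So the Dmitriev--Dynkin/Karpelevich cone theorem genuinely must be quoted --- which is acceptable, since the paper does exactly that --- but your proof should present it as an external citation rather than suggest it can be rederived from those two devices.
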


The proof of this proposition, and a general description of the 
inverse eigenvalue problem can be found 
in Ref.~\cite{karpelevich1951characteristic,Higham2011}.
It is related with the inverse eigenvalue problem, and can also 
be used when studying the existence of stochastic roots of 
matrices. 

One additional necessary condition for time-homogeneous generators that
will be usefull below when comparing with time-inhomogeneous
generators is the following one:
\begin{proposition}
If $\mathbf{T}$ is embeddable, then every negative eigenvalue of
$\mathbf{T}$ has even algebraic multiplicity. 
\label{nEigenprop0}
\end{proposition}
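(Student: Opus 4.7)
The plan is to exploit the identity $\mathbf{T} = \exp(\mathbf{Q})$ together with the fact that $\mathbf{Q}$ is a \emph{real} matrix, and then apply the spectral mapping theorem. First I would recall that if $\mathbf{Q}$ has Jordan decomposition with distinct eigenvalues $\mu_1,\dots,\mu_r$ of algebraic multiplicities $m_1,\dots,m_r$, then $\exp(\mathbf{Q})$ has eigenvalues $\exp(\mu_i)$, and for any fixed $\lambda$ in the spectrum of $\mathbf{T}$ the algebraic multiplicity of $\lambda$ equals the sum
\begin{equation}
\mathrm{mult}_{\mathbf{T}}(\lambda) \;=\; \sum_{i:\,\exp(\mu_i)=\lambda} m_i .
\label{multsum}
\end{equation}
This follows from computing the characteristic polynomial of $\exp(\mathbf{Q})$ block by block on the Jordan form of $\mathbf{Q}$.

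Next, I would analyze which $\mu$ can map to a given negative real $\lambda$. Writing $\mu=\alpha+i\beta$, the condition $\exp(\mu)=e^{\alpha}(\cos\beta+i\sin\beta)<0$ forces $\sin\beta=0$ together with $\cos\beta<0$, so $\beta=(2k+1)\pi$ for some integer $k$, and in particular $\beta\neq 0$. Thus every preimage $\mu$ of a negative eigenvalue of $\mathbf{T}$ is \emph{non-real}, and its real part is determined by $e^{\alpha}=|\lambda|$.

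The crucial real structure now enters: because $\mathbf{Q}$ has real entries, its characteristic polynomial has real coefficients, so non-real eigenvalues come in complex conjugate pairs $\mu$, $\bar\mu$ of equal algebraic multiplicity. For $\mu=\alpha+i(2k+1)\pi$, the conjugate $\bar\mu=\alpha-i(2k+1)\pi$ is \emph{distinct} from $\mu$ (since $\beta\neq 0$) yet also satisfies $\exp(\bar\mu)=\lambda$. Hence in the sum \eqref{multsum} the indices split into disjoint pairs $\{i,i'\}$ with $\mu_{i'}=\bar\mu_i$ and $m_i=m_{i'}$, so the total equals $2\sum_{\text{pairs}} m_i$, which is even.

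The only delicate step is justifying \eqref{multsum}: one must make sure that distinct preimages $\mu_i$ contribute their full Jordan structure without interfering, and that the conjugation pairing exhausts the preimages. Both become transparent by writing $\mathbf{Q}$ in its (complex) Jordan form $\mathbf{Q}=S J S^{-1}$ and using $\exp(\mathbf{Q})=S\exp(J)S^{-1}$, since $\exp$ preserves each generalized eigenspace and acts on a Jordan block for $\mu$ as an invertible upper-triangular block with diagonal entries $\exp(\mu)$. This makes \eqref{multsum} manifest and completes the argument.
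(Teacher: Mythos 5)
Your proof is correct. Note that the paper itself states Proposition~\ref{nEigenprop0} without any proof, presenting it as a known necessary condition (it is classical, going back to Culver's characterization of real matrix logarithms), so there is no in-paper argument to compare against. Your route---spectral mapping via the Jordan form of $\mathbf{Q}$, the observation that every $\exp$-preimage of a negative real number has imaginary part an odd multiple of $\pi$ and is therefore non-real, and the conjugate pairing (with equal multiplicities) of non-real eigenvalues of a real matrix---is the standard argument, and each step, including the multiplicity identity you single out as delicate, is sound. One remark: you use only that $\mathbf{Q}$ is real, not that it is a generator, so you have in fact proved the stronger statement that any matrix admitting a \emph{real} logarithm has negative eigenvalues of even algebraic multiplicity; this stronger form is precisely what the paper invokes later when it argues that the example matrix $\mathbf{T}_E$, having distinct (hence simple) negative eigenvalues, admits no real-valued logarithm at all.
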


In general, Prop.~\ref{nEigenprop0} is usefull for the cases when
$\mathbf{T}$ has negative real eigenvalues.

Sufficient conditions for the existence of one
generator, usually deal with considering different branches of the
logarithm of the transition matrix and check if they are valid
generators, i.e., if their off-diagonal entries are real and positive,
and their row-sums are one.
In the particular case when it is known that the only possible
generator is the principal logarithm, then computing
Eq.~(\ref{logseries}) gives a complete answer to  
whether or not a valid generator exists.
In case all necessary conditions hold, it is legitimate to raise the 
hypothesis a generator may exist, but there is still 
the question if the generator is unique.

The following two propositions are sufficient conditions for the uniqueness 
of one homogeneous generator\cite{Israel2001}. The first one reads:
\begin{proposition}
Let $\mathbf{T}$ $\in  \mathbb{R} ^{n \times n}$ be a transition matrix.
\begin{enumerate}
\item[a)] If $\Det{(\mathbf{T})} > \frac{1}{2}$, then $\mathbf{T}$ has at most one generator.
\item[b)]  If $\Det{(\mathbf{T})} > \frac{1}{2}$ and $||\mathbf{T}-\mathbf{I}||<\frac{1}{2}$ using any operator norm, then $\log(\mathbf{T})$ is the only possible generator of $\mathbf{T}$.
\item[c)] If $\mathbf{T}$ has distinct eigenvalues, and $\Det{(\mathbf{T})} > \exp{(-\pi)}$, then $\log(\mathbf{T})$ is the only possible generator of  $\mathbf{T}$. 
\end{enumerate}
\label{unicidade1}
\end{proposition}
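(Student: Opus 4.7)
The plan is to combine the eigenvalue bound from Proposition~\ref{necessary-condition-eigen-3} with the fact that any two solutions $\mathbf{Q}_{1},\mathbf{Q}_{2}$ of $\exp(\mathbf{Q})=\mathbf{T}$ differ only in the branch of logarithm chosen at each eigenvalue of $\mathbf{T}$; correspondingly, their eigenvalues (paired through the spectral structure of $\mathbf{T}$) differ by integer multiples of $2\pi i$. All three parts of the proposition then reduce to showing that the determinant hypothesis squeezes every eigenvalue of any generator into the principal strip $|\mathrm{Im}\,\lambda|<\pi$, so that no non-principal branch is admissible.

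For part (a), the Perron--Frobenius argument sketched after Proposition~\ref{necessary-condition-eigen-3} requires no distinctness hypothesis: setting $k=\Tr\mathbf{Q}=\log\Det\mathbf{T}$, the matrix $\mathbf{Q}'=\mathbf{Q}-k\mathbf{I}$ has non-negative entries (each diagonal entry $\mathbf{Q}_{ii}\leq 0$ dominates the sum $k=\sum_{j}\mathbf{Q}_{jj}$) and constant row-sum $-k$, so its spectral radius equals $-k$ and every eigenvalue $a+ib$ of $\mathbf{Q}$ obeys $|b|\leq|k|$. Under $\Det\mathbf{T}>1/2$ this yields $|b|<\log 2<\pi$. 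If two generators $\mathbf{Q}_{1}\neq\mathbf{Q}_{2}$ coexisted, some corresponding eigenvalues would differ by $2\pi i\ell$ with $\ell\neq 0$, forcing one imaginary part to be at least $2\pi-\log 2>\log 2$, contradicting the bound.

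For part (b), uniqueness is already supplied by (a), so only the identification of the unique generator as $\log\mathbf{T}$ remains. The hypothesis $\|\mathbf{T}-\mathbf{I}\|<1/2$ combined with the spectral-radius inequality places every eigenvalue of $\mathbf{T}$ in the disk $|z-1|<1/2$, hence $S<1$, so by Proposition~\ref{suficient-condition-1} the series~(\ref{logseries}) converges to a matrix with zero row-sums whose eigenvalues lie in a small arc around the positive real axis and are thus well inside the principal strip. Any candidate generator must lie in the same strip by part (a); the principal-strip constraint, together with $\exp(\mathbf{Q})=\mathbf{T}$, then pins $\mathbf{Q}$ down to $\log\mathbf{T}$ eigenspace by eigenspace. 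Part (c) follows by the same logic under its alternative hypotheses: distinct eigenvalues make Proposition~\ref{necessary-condition-eigen-3} directly applicable, and the weaker bound $\Det\mathbf{T}>e^{-\pi}$ still gives $|b|<\pi$, excluding every non-principal branch.

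The main obstacle I expect lies in part (a) when $\mathbf{T}$ has an eigenvalue of geometric multiplicity greater than one: the step ``the eigenvalues of $\mathbf{Q}_{1}$ and $\mathbf{Q}_{2}$ agree modulo $2\pi i$'' does not immediately yield $\mathbf{Q}_{1}=\mathbf{Q}_{2}$, because on a multidimensional eigenspace one could add a nontrivial shift $2\pi i N$ with $N$ diagonalisable over the integers. The remedy is to observe that the principal-strip constraint forces every such $N$ to have all eigenvalues equal to zero, hence $N=0$ by diagonalisability; alternatively, one perturbs $\mathbf{T}$ so that its eigenvalues become distinct and passes to the limit using continuity of the matrix exponential and logarithm. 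This is the only non-routine step of the argument.
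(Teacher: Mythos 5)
The paper does not actually prove Proposition~\ref{unicidade1}; it defers entirely to Ref.~\cite{Israel2001} (``The proof of both Props.~\ref{unicidade1} and \ref{unicidade2} can be found in Ref.~\cite{Israel2001}''). Your argument is, in substance, the proof given there: use the Perron--Frobenius bound $|\mathrm{Im}\,\lambda|\leq|\log\Det\mathbf{T}|$ for every eigenvalue of every generator (and you are right that this bound does not need the distinct-eigenvalue hypothesis under which the paper states Prop.~\ref{necessary-condition-eigen-3}), observe that the determinant hypotheses force all these eigenvalues into the open strip $|\mathrm{Im}\,z|<\pi$ (indeed $<\log 2$ in parts (a) and (b)), and conclude that only the principal branch is admissible. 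Parts (b) and (c) are handled correctly: the role of $\|\mathbf{T}-\mathbf{I}\|<\tfrac12$ is exactly to guarantee that the principal logarithm exists and is given by the series~(\ref{logseries}), and distinct eigenvalues in (c) make every solution of $e^{\mathbf{Q}}=\mathbf{T}$ a polynomial in $\mathbf{T}$, so the branch choice at each eigenvalue determines $\mathbf{Q}$.

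The one step that is not yet sound is the treatment of repeated eigenvalues in part (a), which you correctly identify as the crux. Your model of the ambiguity --- that two solutions differ by an additive shift $2\pi i N$ with $N$ diagonalisable over the integers --- is not the correct parameterisation: by Gantmacher's classification, all solutions of $e^{\mathbf{X}}=\mathbf{T}$ have the form $\mathbf{Z}\mathbf{U}\,\mathrm{diag}(\mathbf{L}_k^{(j_k)})\,\mathbf{U}^{-1}\mathbf{Z}^{-1}$, where the $j_k$ are branch indices per Jordan block and $\mathbf{U}$ ranges over matrices commuting with the Jordan form; the non-primary solutions form continuous families obtained by conjugation, not by adding $2\pi i N$. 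The correct closing move is: the strip constraint forces every $j_k=0$, so $\mathrm{diag}(\mathbf{L}_k^{(0)})$ is a primary (polynomial) function of the Jordan form, hence commutes with $\mathbf{U}$, and the dependence on $\mathbf{U}$ drops out --- the solution collapses to the principal logarithm. Your first remedy lands near this but does not justify the additive form; your second remedy (perturb to distinct eigenvalues and pass to the limit) does not work, because uniqueness statements are not preserved under limits and the set of generators is not continuous in $\mathbf{T}$. With the Gantmacher/Culver lemma substituted for that step, the proof is complete and coincides with the one in the cited reference.
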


The second property b) guarantees that, when there are no repeated eigenvalues,
only a finite number of generators exist. 
Such property is particularly relevant, since in this case it is often 
possible to find all generators\cite{Israel2001}.

The second proposition for the uniqueness of one generator is:
\begin{proposition}
If $\mathbf{T}$ is a Markov matrix with distinct eigenvalues 
$\lambda_{1},\dots,\lambda_{n}$. We have that 
\begin{itemize}
\item[a)] Only a finite number of solution $e^{\mathbf{Q}}=\mathbf{T}$ can be 
Markov Generators. 
\item[b)] If $\vert\lambda_{r}\vert>\exp{(-\pi\tan{(\tfrac{\pi}{n})})}$ 
for all $r$, then the principal logarithm is the only $\mathbf{Q}$ such that 
$\exp{(\mathbf{Q})}=\mathbf{T}$.
\end{itemize}
\label{unicidade2}
\end{proposition}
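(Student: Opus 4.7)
The plan is to exploit that $\mathbf{T}$ has distinct eigenvalues, so it is diagonalisable as $\mathbf{T} = \mathbf{V}\mathbf{D}\mathbf{V}^{-1}$ with $\mathbf{D} = \mathrm{diag}(\lambda_1,\dots,\lambda_n)$, and every matrix solution of $e^{\mathbf{Q}} = \mathbf{T}$ is of the form $\mathbf{Q} = \mathbf{V}\widetilde{\mathbf{D}}\mathbf{V}^{-1}$ with $\widetilde{\mathbf{D}} = \mathrm{diag}(\log\lambda_1 + 2\pi i k_1, \dots, \log\lambda_n + 2\pi i k_n)$ for integers $k_r$, constrained so that indices corresponding to complex-conjugate eigenvalues of $\mathbf{T}$ receive opposite integers (which is what it takes for $\mathbf{Q}$ to be real). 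Both parts then reduce to controlling which tuples $(k_1,\dots,k_n)$ can possibly yield a Markov generator.

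For part (a), I would invoke Proposition \ref{necessary-condition-eigen-3}: any valid generator must have every eigenvalue $\mu_r = \log\lambda_r + 2\pi i k_r$ satisfying $|\mathrm{Im}(\mu_r)| = |\arg(\lambda_r) + 2\pi k_r| \leq |\log \det \mathbf{T}|$. Since $\arg(\lambda_r) \in (-\pi,\pi]$ is fixed and $\log\det \mathbf{T}$ is finite, each $k_r$ is forced into a bounded interval of integers. Thus only finitely many tuples $(k_1,\dots,k_n)$ are candidates, giving the claimed finiteness.

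For part (b), I would argue by contradiction. Assume a Markov generator $\mathbf{Q}$ with $e^{\mathbf{Q}} = \mathbf{T}$ has $k_r \neq 0$ for some $r$. Since $\mathbf{Q}$ is a generator, $e^{s\mathbf{Q}}$ is a valid transition matrix for every $s \in [0,1]$, with corresponding eigenvalue $|\lambda_r|^s \exp(i s(\arg\lambda_r + 2\pi k_r))$. Because $|k_r|\geq 1$ and $\arg(\lambda_r) \in (-\pi,\pi]$, the unwrapped phase covers an interval of length at least $\pi$ as $s$ varies in $[0,1]$, so there exists $s^\star \in (0,1]$ at which the eigenvalue has principal argument $\pm\pi$, i.e.\ lies on the negative real axis. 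Applying Proposition \ref{nEigenprop1} to the transition matrix $e^{s^\star \mathbf{Q}}$ at this eigenvalue then forces $|\lambda_r|^{s^\star} \leq \exp(-\pi \tan(\pi/n))$; however, since $|\lambda_r|\leq 1$ and $s^\star \leq 1$ we have $|\lambda_r|^{s^\star} \geq |\lambda_r| > \exp(-\pi\tan(\pi/n))$ by hypothesis, a contradiction. Hence $k_r = 0$ for every $r$ and $\mathbf{Q}$ is the principal logarithm.

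The only real delicacy I anticipate is the sweep step in part (b): one must verify in every sign case for $k_r$ that the family $\{e^{s\mathbf{Q}}\}_{s\in[0,1]}$ actually realises an eigenvalue on the negative real axis at some $s^\star \in (0,1]$, so that the Karpelevich bound bites at its tightest value $|\theta|=\pi$. The case $k_r \leq -1$ can be handled either by passing to the conjugate eigenvalue of $\mathbf{Q}$ (which automatically has $-k_r \geq 1$ because $\mathbf{Q}$ is real) or by appealing to the symmetry of the Karpelevich region about the real axis. Everything else---the diagonalisation, the enumeration of branch choices, and the monotonicity $|\lambda_r|^{s^\star} \geq |\lambda_r|$ for $s^\star \leq 1$---is essentially routine once this framework is in place.
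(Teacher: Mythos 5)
Your argument is correct: the diagonalisation of $\mathbf{T}$ reduces every solution of $e^{\mathbf{Q}}=\mathbf{T}$ to a branch choice $(k_1,\dots,k_n)$, the imaginary-part bound of Prop.~\ref{necessary-condition-eigen-3} leaves only finitely many admissible tuples for part (a), and the sweep of the eigenvalues of $e^{s\mathbf{Q}}$, $s\in[0,1]$, through the region of Prop.~\ref{nEigenprop1} (read with $|\theta|$, the region being symmetric about the real axis) rules out any $k_r\neq 0$ for part (b). Note that the paper itself gives no proof of this proposition---it defers entirely to Ref.~\cite{Israel2001}---and your argument is essentially the standard one found there, with the understood reading that part (b) asserts the principal logarithm is the only \emph{candidate generator}, not literally the only logarithm.
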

The proof of both Props.~\ref{unicidade1} and \ref{unicidade2} can be 
found in Ref.~\cite{Israel2001}.

\section{A example: the circulant transition matrix}
\label{sec:example}

As a mathematical problem, the embedding problem is still open 
for a general $n$-dimensional matrix, but it can be analytically
solved for some subclasses of matrices. 
In this section we address in detail a simple example in three dimensions,
namely the embedding of circulant transition matrices of the form:
\begin{equation}
\mathbf{T}_C = 
\left (
\begin{array}{ccc}
  a &  b & c\\
  c &  a & b \\
  b &  c & a \\
\end{array}
\right ) ,
\label{TC}
\end{equation}
or simply $\mathbf{T}_C=\hbox{circ}(a,b,c)$, with $0\geq a,b,c\geq 1$ and $a+b+c=1$.
Circulant transition matrices have two independent degrees of freedom:
any pair of values $(a,b)$ can represent a three-dimensional circulant
transition matrices if $a+b<1$, $a,b>0$. 
See the triangular region in Fig.~\ref{fig01}.

It is easy to check that all necessary conditions in 
Prop.~\ref{ncondition1} for 
a generator to exist are fulfilled if $0<a^3+b^3+c^3-a^3b^3c^3\leq a^3$.
Further, according to Prop.~\ref{necessary-condition-eigen-3} a 
generator may exist if the argument of the eigenvalues of
$\mathbf{T}_C$ are not larger than $\log{(a^3+b^3+c^3-a^3b^3c^3)}$.

For the particular case of the circulant transition matrix,
only Prop.~\ref{nEigenprop1} matters, since in this case it
turns out to be a necessary and sufficient condition as we next prove.
\begin{figure}[t]
\centering
\includegraphics[width=0.46\textwidth]{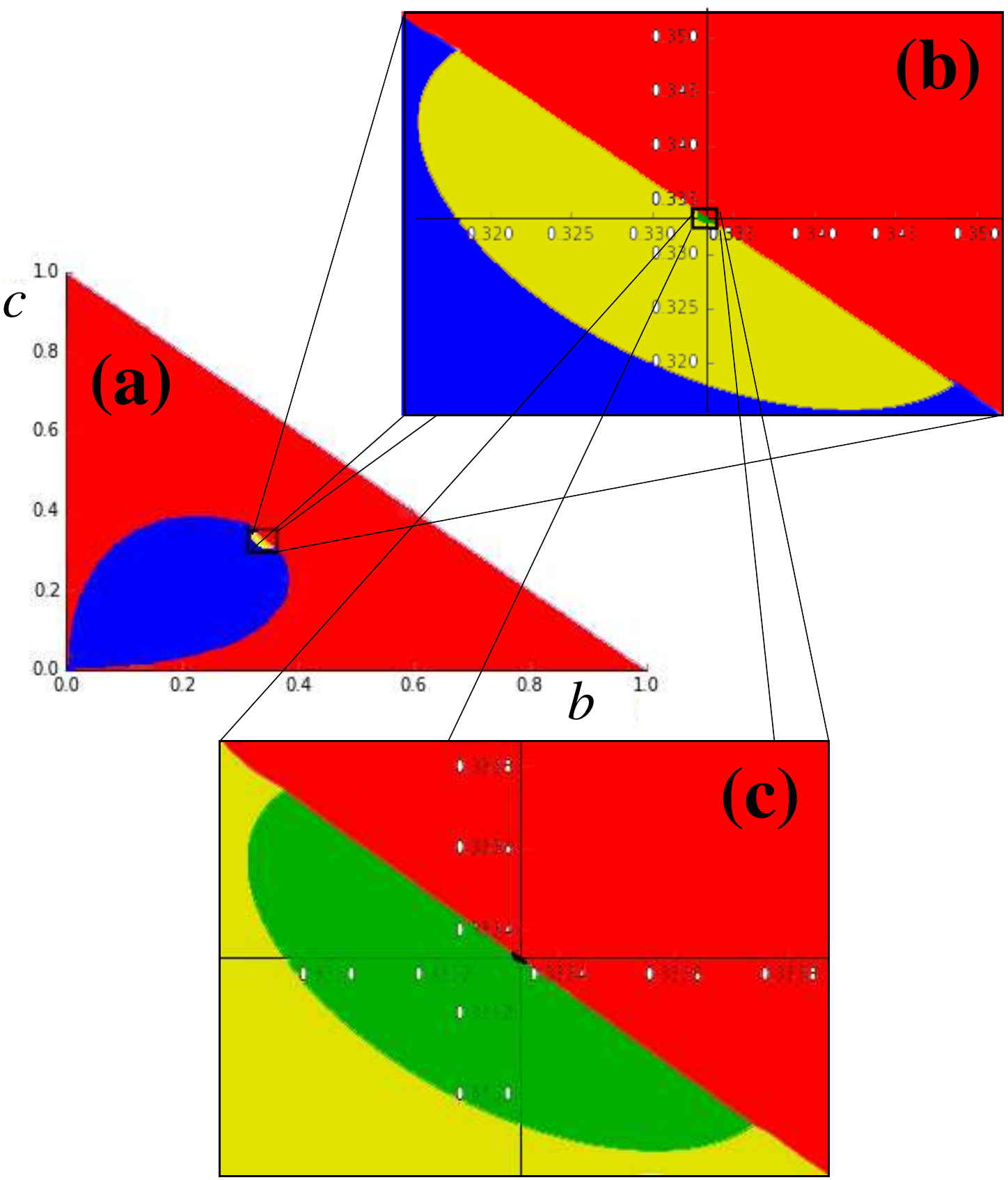}
\caption{%
        (Color online)
        {\bf (a)} 
        Region in parameter-space of transition matrix $\mathbf{T}_C$,
        Eq.~(\ref{TC}), for which a generator $\mathbf{Q}_C$ exists.
        The triangular (red) region is the one for which matrix
        $\mathbf{T}_C$ is a transition matrix, $a,b,c>0$ and $a+b+c=1$.
        The blue region indicates the region of parameter values for
        which only one generator exists, while the yellow (small) 
        region indicates a region where two generators exist.
        {\bf (b)} By zooming in this region shows a green region
        where three generators exist, and
        {\bf (c)} continuing to zoom shows smaller and smaller 
        regions, where a larger number of genertors exist 
        (see text).}
\label{fig01}
\end{figure}

To that end, we write the transition matrix as 
$\mathbf{T}_C=\exp{\mathbf{Q}_C}$, since the exponential of a circulant 
matrix with real entries is itself a circulant matrix with real entries 
and consider $\mathbf{Q}_C$
in the form $\mathbf{Q}_C=\hbox{circ}(-\alpha,\beta,\gamma)$.
For $\mathbf{Q}_C$ to be a generator we need to prove that 
$\alpha,\beta,\gamma>0$.

The row-sums of $\mathbf{T}_C$ are equal to one by definition and this
can only happen if the row-sums of $\mathbf{Q}_C$ are equal to zero.
Thus, the equality $\alpha=\beta+\gamma$. 
Moreover, 
it can be shown that, computing the principal logarithm of $\mathbf{T}_C$,
yields a matrix with negative diagonal elements. Thus we take
$\alpha>0$.

Since $\alpha>0$ and all entries of the generator $\mathbf{Q}$ are real 
we need only to prove that $\beta$ and $\gamma$ are both non-negative. 
Since $\alpha=\beta+\gamma$, either $\beta$ or $\gamma$ must be positive. 
Therefore we only need to prove that $\beta \gamma>0$.

Proposition \ref{nEigenprop1} gives a condition for the eigenvalues
of the transition matrix $\mathbf{T}_C$ 
to have a generator matrix.
It can be proven\cite{runnenburg1962elfving} 
that such condition hold if and only if
an equivalent condition for $\mathbf{Q}_C$ holds, namely:
\begin{equation}
\left \vert 
\frac{\Im{(\lambda_i)}}{\Re{(\lambda_i)}} 
\right \vert < \tan(\frac{\pi}{3}) \, ,
\label{ped_start}
\end{equation}
where $\lambda_i$ with $i=1,2,3$ are the eigenvalues of 
$\mathbf{Q}_C$\cite{gray2005toeplitz}:
\begin{subequations}
\begin{eqnarray}
\lambda_1 &=& 0                                      , \label{eig1}\\
\lambda_2 &=& -\beta-\gamma + \beta k + \gamma k^{\ast}                , \label{eig2}\\
\lambda_3 &=& -\beta-\gamma + \beta k^{\ast} + \gamma k = \lambda_2^{\ast}  , \label{eig3}
\end{eqnarray}
\label{ped_eigenvalues}
\end{subequations} 
with $k = e^{\frac{2\pi i}{3}} $ and $k^{\ast}$ its complex conjugate. 

Using $\lambda_2$ in Eq.~(\ref{eig2}) and substituting in 
Eq.~(\ref{ped_start}), yields
\begin{equation}
\left \vert 
\frac{\lambda_2 - \lambda_2^{\ast}}{\lambda_2 + \lambda_2^{\ast}} 
\right \vert 
< \tan{\left ( \frac{\pi}{3} \right ) } \, 
\label{ped_start2}
\end{equation}
and through algebraic manipulation one arrives to
\begin{equation}
\left \vert \frac{\beta-\gamma}{\beta+\gamma} \right \vert < 1 .
\label{proof1}
\end{equation}
The last inequality implies necessarily that $\beta \gamma > 0$.
A similar result is obtained by substituting in 
Eq.~(\ref{ped_start}) one of the other eigenvalues
$\lambda_0$ and $\lambda_2$.

Hence, in our particular case of a circulant matrix, Prop.~\ref{nEigenprop1} 
is also a sufficient condition and one needs only to determine the
inequality in Eq.~(\ref{eq:nEigenprop1}) as a function of the degrees
of freedom in matrix $\mathbf{T}_C$ for all its three eigenvalues 
\begin{subequations}
\begin{eqnarray}
\lambda^{(T)}_1 &=& 1                                      , \label{eig1T}\\
\lambda^{(T)}_2 &=& \tfrac{1}{2}(2-3b-3c)+\tfrac{\sqrt{3}}{2}(b-c)i     , \label{eig2T}\\
\lambda^{(T)}_3 &=& \tfrac{1}{2}(2-3b-3c)-\tfrac{\sqrt{3}}{2}(b-c)i     , \label{eig3T}
\end{eqnarray}
\label{ped_eigenvaluesT}
\end{subequations}

The first eigenvalue is independent of the parameters.
The
other two are complex conjugate, having the same norm $r$ and
symmetric arguments $\theta$.
Thus, we only need to consider one
eigenvalue, say $\lambda^{(T)}_3=r\exp{(i\theta)}$, which according to
Prop.~\ref{nEigenprop1} for $\mathbf{T}_C$ to be embeddable
must fullfil $r\leq \exp{(-\sqrt{3}\theta)}$ with
\begin{equation}
r=\tfrac{1}{2}\left (
    \left ( 2-3(b+c) \right )^2 + 3(b-c)^2
    \right )^{1/2}
\label{rcirc}
\end{equation}
and
\begin{equation}
\theta = \left \{
\begin{array}{lcl}
\arctan{\tilde{\theta}}  &\Leftarrow& c<\tfrac{2}{3}-b , \cr
& & \cr
\arctan{\tilde{\theta}} + \sgn{(b-c)}\pi  &\Leftarrow&
c>\tfrac{2}{3}-b , \cr
& & \cr
\tfrac{\pi}{2}\sgn{(b-c)}  &\Leftarrow& c=\tfrac{2}{3}-b , 
\end{array}
\right .
\label{rcircArg}
\end{equation}
where $\tilde{\theta}=\sqrt{3}(b-c)/(2-3b-3c)$.

Figure \ref{fig01} shows the region within the triangle $1-b-c>0$,
$b>0$ and $c>0$ where the circulant transition matrix $\mathbf{T}_C$ 
has a generator, i.e.~the region where $a=1-b-c$ and $b$ and $c$
obey the inequality in Eqs.~(\ref{eq:nEigenprop1}), (\ref{rcirc}) and (\ref{rcircArg}).
The number of valid generators of $\mathbf{T}_C$, a three-dimensional
circulant transition matrix, can also be determined from its
eigenvalues, namely it is given by the largest integer smaller than 
$\left ( \sqrt{3} 
            \log{\left (
                    \Re^2{(\lambda^{(T)})} +
                    \Im^2{(\lambda^{(T)})} 
                    \right )} \right )/ (4\pi)$.

Figure~\ref{fig01}a shows one blue region and one smaller yellow
region.
While the blue region indicates the set of parameter values for $b$
and $c$ for which only one generator exists, the yellow region 
comprehends the set of parameter values for which $\mathbf{T}_C$ has
two or more generators.
By zooming in this region, smaller and smaller regions appear,
Figs.~\ref{fig01}b and \ref{fig01}c, near the crossing point between 
the diagonal $c=b$ and the line $c=\tfrac{2}{3}-b$, 
where a larger number of generators exist.

\section{The time-inhomogeneous embedding problem}
\label{sec:inhomogeneous}

In this section we show which of the known theorems for time-homogeneous 
embedding problem hold when both transition matrix and its generator 
depend explicitly on time. In this scope, we provide three new conditions,
two necessary and one sufficient, for the existence of a
time-inhomogeneous generator.
We also provide two additional necessary and sufficient conditions which 
enable the possibility for testing equivalent matrices.

The generator $\mathbf{Q}(t)$ is considered to explicitly depend
on time $t$, as well as its 
corresponding transition matrix $\mathbf{T}(t,\tau)$. 
As stated in the introduction, a transition matrix is solution of 
Eq.~(\ref{MC1}),
i.e.~it has a generator if and only if it describes a time-continuous 
and Markov process, besides having the properties of a transition
matrix (non-negative entries and unitary row-sums).

For time-inhomogeneity, the general solution of Eq.~(\ref{MC1}) is 
given by:
\begin{equation}
\mathbf{T}(t,\tau)= \sum_{k=0}^{\infty} X_{k}(t-\tau)\, ,
\label{general_solution}
\end{equation}
with $X_{0}(t-\tau) \equiv X_0 = \mathbf{1}$ and 
\begin{equation}
X_{k+1}(t-\tau) = \int_{t-\tau}^{t} X_{k}(s)\mathbf{Q}(s) ds \, . 
\label{general_solution2}
\end{equation}
Equation (\ref{general_solution2}) is known as the Peano-Baker 
series\cite{baake2011peano}.
In the particular case that $\mathbf{Q}(t)$ and $\mathbf{Q}(t^{\prime})$ 
commute for all $t$ and $t^{\prime}$ solution (\ref{general_solution}) 
reads
\begin{equation}
\mathbf{T}(t,\tau) = \exp{\left (
                      \int_{t-\tau}^{t}\mathbf{Q}(s)ds
                      \right )} \, .
\label{general_solution3}
\end{equation}

The first necessary proposition for time-inhomogeneous generators
follows simply from the fact that 
$\mathbf{T}(t,\tau)$ is a transition matrix:
\begin{proposition}
If a transition matrix $\mathbf{T}(t,\tau)$ has a negative 
determinant, then no generator $\mathbf{Q}(s)$ exists,
for $t<s<t+\tau$.
\label{necessary1}
\end{proposition}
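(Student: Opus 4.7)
The plan is to establish the time-inhomogeneous analogue of the identity $\Tr(\mathbf{Q})=\log\Det(\mathbf{T})$ used in Eq.~(\ref{det-exp=exp-tr}) and to read off the positivity of $\Det(\mathbf{T}(t,\tau))$ directly, so that a negative determinant immediately rules out any generator.

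First I would invoke Jacobi's (Abel-Liouville) formula applied to the linear matrix ODE satisfied by $\mathbf{T}(t,\tau)$. By differentiating the Peano-Baker series Eqs.~(\ref{general_solution})-(\ref{general_solution2}) termwise with respect to $\tau$, one verifies that $\mathbf{T}(t,\tau)$ obeys a linear matrix ODE driven by $\mathbf{Q}(s)$ on $s\in[t-\tau,t]$, with initial condition $\mathbf{T}(t,0)=\mathbf{I}$. Jacobi's identity then gives
\begin{equation}
\frac{d\,\Det(\mathbf{T}(t,\tau))}{d\tau}=\Tr(\mathbf{Q}(t-\tau))\,\Det(\mathbf{T}(t,\tau)),
\label{plan:jacobi-diff}
\end{equation}
and integrating from $\tau=0$ yields the key identity
\begin{equation}
\Det(\mathbf{T}(t,\tau))=\exp\!\left(\int_{t-\tau}^{t}\Tr(\mathbf{Q}(s))\,ds\right),
\label{plan:jacobi-int}
\end{equation}
which collapses to Eq.~(\ref{det-exp=exp-tr}) when $\mathbf{Q}$ is constant.

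Second, the right-hand side of Eq.~(\ref{plan:jacobi-int}) is a real exponential, hence strictly positive for all $t$ and $\tau$, irrespective of the sign of the integrated trace (which, for a generator with non-negative off-diagonals and zero row-sums, is in fact non-positive). Thus any $\mathbf{T}(t,\tau)$ that admits a generator necessarily satisfies $\Det(\mathbf{T}(t,\tau))>0$. Taking the contrapositive, a strictly negative determinant forces the non-existence of any generator $\mathbf{Q}(s)$ on the relevant interval (whether written as $(t-\tau,t)$ or, modulo the interval convention of the statement, as $(t,t+\tau)$), which is exactly the proposition.

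The argument is essentially as short as in the homogeneous case once Jacobi's formula is in place; the only step requiring mild care, and the closest thing to an obstacle here, is the termwise differentiation of the Peano-Baker series needed to put $\mathbf{T}(t,\tau)$ into a bona fide linear-ODE form compatible with Jacobi's identity. This is a standard manipulation in the theory of non-autonomous linear systems and does not introduce any genuine difficulty.
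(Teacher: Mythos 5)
Your proof is correct and follows essentially the same route as the paper: both apply Jacobi's (Abel--Liouville) formula to the linear ODE satisfied by $\mathbf{T}(t,\tau)$ to obtain $\Det(\mathbf{T}(t,\tau))=\exp\left(\int_{t-\tau}^{t}\Tr(\mathbf{Q}(s))\,ds\right)>0$, and then conclude by contraposition. Your added remark that positivity follows simply from the right-hand side being a real exponential (independently of the sign of the integrated trace) is in fact cleaner than the paper's closing sentence.
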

\begin{proof}
To prove the positiveness of the determinant we start by assuming that 
a generator $\mathbf{Q}(t)$ exists. Then, 
letting the arguments of $\mathbf{T}$ and $\mathbf{Q}$ drop for simplicity,
\begin{subequations}
\begin{eqnarray}
\frac{d}{dt} \Det{\mathbf{T}} &=& 
\Det{\mathbf{T}}\Tr{\left (
                                \mathbf{T}^{-1}
                                \frac{d\mathbf{T}}{dt}
                                \right )}  \label{proof2_0}\\
\frac{d \log \left ( \Det{\mathbf{T}} \right)}{dt} &=& 
\Tr{\left (
   \mathbf{T}^{-1}\mathbf{T}\mathbf{Q}
   \right )} \label{proof2_00}\\
\Det{\mathbf{T}} &=& 
\exp{\left ( \int_{t-\tau}^{t} \Tr{\mathbf{Q}} ds \right )} > 0 .
\label{proof2}
\end{eqnarray}
\end{subequations}
The final inequality stands true since the trace of $\mathbf{Q}(t)$ is always 
a real (negative) value.
\end{proof}

The second necessary proposition deals also with the fact that $\mathbf{T}$ 
is a transition matrix, namely that its entries are probabilities:
\begin{proposition}
If a transition matrix $\mathbf{T}$ fulfills 
$\Det{\mathbf{T}} > \prod_{i} T_{ii}$, then no generator exists.
\label{necessary2}
\end{proposition}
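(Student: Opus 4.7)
The plan is to adapt the classical argument behind Prop.~\ref{ncondition1}(b) by proving, from Eq.~(\ref{MC1}) and the non-negativity of the off-diagonal entries of $\mathbf{Q}(s)$, a pointwise lower bound $T_{ii}(t,\tau)\ge\exp\!\bigl(\int_{t-\tau}^{t}Q_{ii}(s)\,ds\bigr)$ on each diagonal entry. Taking the product of these bounds over $i$ then reproduces the exponential-of-trace identity already established in the proof of Prop.~\ref{necessary1}, and the contrapositive delivers the claim.

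First I would assume, for contradiction, that a valid generator $\mathbf{Q}(s)$ exists. Extracting the $(i,i)$ component of Eq.~(\ref{MC1}) yields
\begin{equation*}
\frac{d T_{ii}}{d\sigma} \;=\; Q_{ii}(\sigma)\,T_{ii} \;+\; \sum_{j\neq i} Q_{ij}(\sigma)\,T_{ji}\, ,
\end{equation*}
where $\sigma$ parametrises the time slice along which the generator acts. Because $\mathbf{Q}(\sigma)$ is a valid generator its off-diagonal entries satisfy $Q_{ij}(\sigma)\ge 0$ for $j\neq i$, and because $\mathbf{T}$ is a transition matrix $T_{ji}\ge 0$. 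The sum on the right is therefore non-negative and can be dropped, leaving the scalar differential inequality $d T_{ii}/d\sigma \ge Q_{ii}(\sigma)\,T_{ii}$.

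Integrating this inequality by Gronwall's lemma, with initial condition $T_{ii}(t,0)=1$, yields the desired bound $T_{ii}(t,\tau)\ge\exp\!\bigl(\int_{t-\tau}^{t}Q_{ii}(s)\,ds\bigr)$. Multiplying over $i=1,\dots,n$ and invoking the determinant formula (\ref{proof2}) from the proof of Prop.~\ref{necessary1} gives
\begin{equation*}
\prod_i T_{ii}(t,\tau) \;\ge\; \exp\!\Bigl(\int_{t-\tau}^{t}\Tr\mathbf{Q}(s)\,ds\Bigr) \;=\; \Det \mathbf{T}(t,\tau)\, ,
\end{equation*}
which is exactly the contrapositive of the claim. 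The main obstacle is keeping the bookkeeping of the time argument of $\mathbf{Q}(\sigma)$ consistent, since Eq.~(\ref{MC1}) as written has $\mathbf{Q}$ fixed at a single instant while the Peano-Baker series (\ref{general_solution2}) treats it as varying over $[t-\tau,t]$; once one reads Eq.~(\ref{MC1}) as the infinitesimal form of the latter so that $Q_{ii}$ is the holding-rate density at time $\sigma$, the Gronwall step and the product over $i$ are routine.
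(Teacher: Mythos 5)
Your argument is essentially identical to the paper's own proof: both extract the diagonal component of Eq.~(\ref{MC1}), drop the non-negative off-diagonal sum, apply Gr\"onwall's inequality with $T_{ii}(t,0)=1$ to get $T_{ii}(t,\tau)\ge\exp\bigl(\int_{t-\tau}^{t}Q_{ii}(s)\,ds\bigr)$, and multiply over $i$ using the trace--determinant identity from Prop.~\ref{necessary1}. You even state the sign condition correctly ($Q_{ij}\ge 0$ for $j\neq i$), where the paper's text contains a typo asserting $Q_{kj}<0$.
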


\begin{proof}
If $\mathbf{T}$ has a generator, then,
\begin{eqnarray}
\frac{dT_{kk}(t, \tau)}{dt}  &=&    \sum_{j} Q_{kj}(t + \tau) T_{jk}(t, \tau)\, , 
\label{eq_difnec2}
\end{eqnarray}
and, since for $k\neq j$, $T_{kj} > 0$ and $Q_{kj} < 0 $, one
arrives to
\begin{eqnarray}
\frac{dT_{kk}(t,\tau)}{dt}  &\geq&   Q_{kk}(t + \tau)T_{kk}(t, \tau) .
\label{eq_difnec22}
\end{eqnarray}
Since $T_{kk}(t,0) = 1$, we can integrate the differential equation in 
Eq.~(\ref{eq_difnec2}) yielding
\begin{equation}
T_{kk}(t, \tau) \geq \exp{\left (
                     \int_{t-\tau}^{t} Q_{kk}(s) ds
                         \right )} ,
\label{eq_nec2}
\end{equation}  
where Gr\"onwall's inequality is used\cite{pachpatte98},
and finally, from Eq.~(\ref{proof2}), one arrives to
\begin{eqnarray}  
\prod_{k} T_{kk}(t, \tau) &\geq& \prod_{k} \exp{ \left (
                          \int_{t-\tau}^{t} Q_{kk}(s) ds \right )} \cr
                &=&  \exp{\left (
                          \sum_k \int_{t-\tau}^{t} Q_{kk}(s) ds
                          \right ) } \cr
                &=&  \exp{\left (
                          \int_{t-\tau}^{t} \Tr{(Q_{kk}(s))} ds
                          \right )} \cr
                &=& \Det{(\mathbf{T}(t,\tau))}  .
\end{eqnarray} 
\end{proof}

The sufficient condition we will implement afterwards deals with the 
particular case of a LU decomposition:

\begin{proposition}
If $\mathbf{T}$ has a LU decomposition with  $\mathbf{L}$ and $\mathbf{U}$ having only non-negative elements, then $\mathbf{T}$ has an inhomogeneous generator $\mathbf{Q}(t)$.  
\label{LU}
\end{proposition}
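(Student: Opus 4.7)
The plan is to exhibit a piecewise-constant inhomogeneous generator on $[t-\tau,\,t]$ by splitting this interval in half and letting $\mathbf{Q}(s)$ drive the chain first through the upper triangular factor $\mathbf{U}$, and then on the second half through the lower triangular factor $\mathbf{L}$, so that the composed propagator equals $\mathbf{L}\mathbf{U}=\mathbf{T}(t,\tau)$. The construction has three steps: (i) normalize so that each triangular factor is itself a Markov matrix, (ii) embed each normalized factor via a time-homogeneous generator, and (iii) glue the two pieces together with an appropriate rescaling in time.

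For step (i), let $\mathbf{D}$ be the diagonal matrix whose $i$-th entry is the $i$-th row-sum of $\mathbf{U}$, and set $\tilde{\mathbf{U}}:=\mathbf{D}^{-1}\mathbf{U}$ and $\tilde{\mathbf{L}}:=\mathbf{L}\mathbf{D}$. Non-negativity of the entries of $\mathbf{U}$ together with the fact that $\mathbf{T}$ has strictly positive row-sums makes the rescaling well defined, and $\tilde{\mathbf{U}}$ is upper triangular, non-negative and row-stochastic. Since $\tilde{\mathbf{L}}\tilde{\mathbf{U}}=\mathbf{L}\mathbf{U}=\mathbf{T}$, $\tilde{\mathbf{L}}$ inherits the stochastic property and is non-negative and lower triangular by construction.

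For step (ii), I would invoke the following embedding lemma: any non-negative triangular row-stochastic matrix with strictly positive diagonal admits a homogeneous generator of the same triangular shape. Convergence of the principal logarithm is guaranteed by Proposition~\ref{suficient-condition-1}, because the eigenvalues of a triangular transition matrix are its diagonal entries and all lie in $(0,1]$. The non-negativity of the off-diagonal entries of $\log\tilde{\mathbf{U}}$ and $\log\tilde{\mathbf{L}}$ can either be checked by a row-by-row recursion obtained from the identity $\mathbf{T}=\exp\mathbf{Q}$ on triangular matrices, or established probabilistically by interpreting a triangular stochastic matrix as the one-step transition of a one-way absorbing chain, whose continuous-time analogue is built explicitly by prescribing non-negative jump rates only to lower-indexed (respectively higher-indexed) states and adjusting the diagonal to make rows sum to zero. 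This yields generators $\mathbf{Q}_U$ and $\mathbf{Q}_L$ with $\exp(\mathbf{Q}_U)=\tilde{\mathbf{U}}$ and $\exp(\mathbf{Q}_L)=\tilde{\mathbf{L}}$.

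For step (iii), define
\begin{equation}
\mathbf{Q}(s)=\begin{cases}
\dfrac{2}{\tau}\,\mathbf{Q}_U, & s\in[t-\tau,\,t-\tau/2],\\
\dfrac{2}{\tau}\,\mathbf{Q}_L, & s\in(t-\tau/2,\,t].
\end{cases}
\end{equation}
On each sub-interval $\mathbf{Q}$ is constant, so the Peano--Baker expansion in Eq.~(\ref{general_solution2}) collapses to an ordinary matrix exponential and produces the propagators $\tilde{\mathbf{U}}$ over $[t-\tau,t-\tau/2]$ and $\tilde{\mathbf{L}}$ over $[t-\tau/2,t]$; composing them in the order dictated by Eq.~(\ref{MC1}) gives $\mathbf{T}(t,\tau)=\tilde{\mathbf{L}}\tilde{\mathbf{U}}=\mathbf{T}$. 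The main obstacle I expect is the triangular embedding lemma, specifically the verification that the off-diagonal entries of the principal logarithm of a triangular stochastic matrix are non-negative; normalization and concatenation are otherwise routine bookkeeping, modulo the mild non-degeneracy assumption that the pivots of the LU decomposition are strictly positive.
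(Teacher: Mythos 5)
Your architecture --- normalize the two factors into triangular stochastic matrices via the diagonal $\mathbf{D}$, embed each one, and concatenate the generators piecewise in time --- matches the paper's, and steps (i) and (iii) are sound (including the order of composition, which correctly places $\mathbf{Q}_U$ on the earlier sub-interval so the propagator is $\tilde{\mathbf{L}}\tilde{\mathbf{U}}$). The gap is the ``triangular embedding lemma'' of step (ii), which you yourself flag as the main obstacle: it is false as stated. A non-negative triangular row-stochastic matrix with strictly positive diagonal need \emph{not} admit a time-homogeneous generator. Take
\begin{equation}
\tilde{\mathbf{U}}=\begin{pmatrix}0.5 & 0.5 & 0\\ 0 & 0.6 & 0.4\\ 0 & 0 & 1\end{pmatrix}.
\end{equation}
Here $\tilde{U}_{13}=0$ while $(\tilde{\mathbf{U}}^{2})_{13}=0.2\neq 0$, so Proposition~\ref{ncondition1}(c) already excludes any homogeneous generator. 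Concretely, the $(1,3)$ entry of the principal logarithm equals $\tilde{U}_{12}\tilde{U}_{23}\,\log[\lambda_1,\lambda_2,\lambda_3]$, a second divided difference of the strictly concave logarithm at the distinct diagonal entries $0.5$, $0.6$, $1$, hence strictly negative; your proposed row-by-row check therefore fails, and so does the probabilistic construction, since any continuous-time one-way chain with positive rates $1\to 2$ and $2\to 3$ assigns positive probability to the transition $1\to 3$ over every positive time window, which this matrix forbids.

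The repair is precisely what the paper does via Proposition~\ref{prop:append1}: a triangular transition matrix is in general only \emph{inhomogeneously} embeddable, and one proves this by peeling off one column at a time, factoring it into a product of $n-1$ elementary triangular stochastic matrices, each of which \emph{is} the exponential of a valid generator. Your piecewise-constant gluing then goes through verbatim, just with $2(n-1)$ sub-intervals instead of two. So the skeleton of your argument survives; only the claim that each triangular factor can be handled by a single homogeneous generator must be replaced by this finer factorization.
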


\begin{proof}
To prove this proposition it is important to know an auxiliary result, Prop.~\ref{prop:append1} in Append.~\ref{append:results}, from which it follows that the property of having a time-dependent generator is preserved under multiplication.
We use this results from proving that a matrix having an $LU$ decomposition, with $\mathbf{L}$ and $\mathbf{U}$ with non-negative entries, can be modeled through a time-dependent generator. For that, it suffices to prove that the matrix $\mathbf{T}$ has an LU decomposition with $\mathbf{L}$ and $\mathbf{U}$ transition matrices. 

Let us first define a diagonal matrix $\mathbf{D}$ with entries
$D_{ii} = (\sum_j U_{ij})^{-1}$.
Thus, $\mathbf{T}$, with dimension $n\times n$ can be written as
$\mathbf{T} =  \mathbf{L} \mathbf{U} =\mathbf{L} \mathbf{D}^{-1} 
\mathbf{D} \mathbf{U} = \mathbf{L}^{\prime} \mathbf{U}^{\prime}$,
with $\mathbf{L}^{\prime} =  \mathbf{L} \mathbf{D}^{-1}$  and  $\mathbf{U}^{\prime} =  \mathbf{D} \mathbf{U}$ triangular matrices that have all non-negative elements since they are a multiplication of one diagonal matrix with one triangular matrix, all of them with non-negative elements. Furthermore their row-sums are one, since
\begin{eqnarray}
\sum_j U^{\prime}_{ij} &=&  \sum_j \sum_k D_{ik} U_{kj} \cr
          & =&  \sum_k D_{ik} ( \sum_j U_{kj} ) = D_{ii} \sum_j U_{ij} ) \cr
          &= & ( \sum_k  U_{ik})^{-1} ( \sum_j U_{kj} ) = 1  ,
\end{eqnarray}
for all $i=1,\dots,n$.
Analogously, since $\sum_j T_{ij} =  1$ for $i$, one has 
\begin{eqnarray}
 \sum_j T_{ij} &=& \sum_j \sum_k L^{\prime}_{ik} U^{\prime}_{kj} \cr  
   &=&  \sum_k L^{\prime}_{ik} ( \sum_j U^{\prime}_{kj} ) \cr
   &=&  \sum_k L^{\prime}_{ik} =  1 .
\end{eqnarray} 
and therefore
\begin{eqnarray}
\sum_k L^{\prime}_{ik} &=&  \underline{1}  .
\end{eqnarray} 
\end{proof}

Notice that in the $LU$ factorization there are usually $n^2 + n$ variables and $n^2$ equations. By imposing the row-sums equal to one, we get $n^2 + n$ equations, and consequently the $LU$ decomposition defined in this way is unique.

To end this Section we introduce two additional propositions, which are 
necessary 
and sufficient for both time-homogeneous and inhomogeneous cases.
They are useful when implementing the computational framework for detecting
generators, since they help to handle cases
where the application of the above propositions do not provide satisfactory
output for the embedding problem.
With these equivalent matrices one aims to derive a class of matrices 
that are embeddable if and only if the ``original'' transition matrix 
$ \mathbf{T} $ is embeddable, which expands the set of possible matrices 
one may properly test.

The first proposition uses the similarity of matrices through permutation 
matrices: 
\begin{proposition}
Let $\mathbf{A} = \mathbf{P}^\top \mathbf{T} \mathbf{P}$, where
$\mathbf{P}$ is a permutation matrix and $\mathbf{T}$ is a transition
matrix. $\mathbf{T}$ is embeddable if and only if $\mathbf{A}$ is also
embeddable. 
\label{basis-permutation}
\end{proposition}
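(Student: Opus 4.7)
The plan is to show that permutation similarity transports generators to generators, so that a generator for $\mathbf{T}$ produces one for $\mathbf{A}$ and vice versa. The forward direction will be the substantive one, and the reverse direction will follow immediately by symmetry, since $\mathbf{P}^\top$ is also a permutation matrix and $\mathbf{T} = \mathbf{P}\mathbf{A}\mathbf{P}^\top$.

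First I would assume $\mathbf{T}$ is embeddable with (possibly time-inhomogeneous) generator $\mathbf{Q}(t)$, so that $\mathbf{T}(t,\tau)$ satisfies Eq.~(\ref{MC1}). I would then define the candidate generator
\begin{equation}
\mathbf{Q}_A(t) = \mathbf{P}^\top \mathbf{Q}(t) \mathbf{P},
\end{equation}
and, using $\mathbf{P}\mathbf{P}^\top = \mathbf{I}$, differentiate the similarity-transformed transition matrix $\mathbf{A}(t,\tau) = \mathbf{P}^\top \mathbf{T}(t,\tau)\mathbf{P}$ in $\tau$ to obtain
\begin{equation}
\frac{d\mathbf{A}}{d\tau} = \mathbf{P}^\top \mathbf{Q}(t)\mathbf{T}\mathbf{P} = \mathbf{P}^\top \mathbf{Q}(t)\mathbf{P}\mathbf{P}^\top \mathbf{T}\mathbf{P} = \mathbf{Q}_A(t)\mathbf{A}(t,\tau),
\end{equation}
which is exactly Eq.~(\ref{MC1}) for $\mathbf{A}$, with initial condition $\mathbf{A}(t,0) = \mathbf{I}$.

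Next I would verify the two algebraic conditions that $\mathbf{Q}_A(t)$ must satisfy to be a bona fide generator. For zero row-sums, the row-sum vector is $\mathbf{Q}_A \mathbf{1} = \mathbf{P}^\top \mathbf{Q}\mathbf{P}\mathbf{1} = \mathbf{P}^\top \mathbf{Q}\mathbf{1} = \mathbf{0}$, using that $\mathbf{P}\mathbf{1} = \mathbf{1}$. For the sign condition I would use the key observation that conjugation by a permutation applies the \emph{same} permutation $\sigma$ to both row and column indices: if $\mathbf{P}$ corresponds to $\sigma$, then $(\mathbf{Q}_A)_{ij} = Q_{\sigma(i)\sigma(j)}$. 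Therefore diagonal entries of $\mathbf{Q}_A$ correspond to diagonal entries of $\mathbf{Q}$, off-diagonals to off-diagonals, and non-negativity of the off-diagonal part is preserved. Similarly $\mathbf{A}$ has non-negative entries and unit row-sums, confirming it is a transition matrix.

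For the converse, I would simply note that $\mathbf{P}^\top$ is itself a permutation matrix and $\mathbf{T} = (\mathbf{P}^\top)^\top \mathbf{A}\mathbf{P}^\top$, so applying the argument just given with the roles of $\mathbf{T}$ and $\mathbf{A}$ swapped produces a generator for $\mathbf{T}$ from one for $\mathbf{A}$. There is no real obstacle here; the only point requiring care is the observation that conjugation by a permutation permutes rows and columns by the \emph{same} $\sigma$, which is precisely what keeps the diagonal/off-diagonal distinction — and therefore the sign condition on generators — intact.
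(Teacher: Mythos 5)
Your proof is correct and follows essentially the same route as the paper's: conjugation by a permutation matrix maps generators to generators because it permutes rows and columns by the same $\sigma$, which preserves zero row-sums and the non-negativity of the off-diagonal entries, and the converse follows by symmetry since $\mathbf{P}^\top$ is itself a permutation matrix. The only (minor) difference is that you carry the argument through the time-inhomogeneous evolution equation rather than the homogeneous exponential $\mathbf{T}=\exp(\mathbf{Q})$ used in the paper's write-up, which makes your version apply verbatim to the inhomogeneous case as well.
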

\begin{proof}
To prove this proposition, we will consider a reblabeling of
the states $i$, $j$, etc. Notice that, under such relabeling,
the properties of the transition matrix do not change. 
Therefore, since changing the transition matrix $\mathbf{T}$ by 
$\mathbf{P}^\top \mathbf{T} \mathbf{P}$ one is, in fact, just relabeling 
the states, one intuitively concludes that if $\mathbf{T}$ is embeddable, 
then $\mathbf{P}^\top \mathbf{T} \mathbf{P}$ should also be embeddable.

We start by assuming that $\mathbf{T}$ is embeddable,
\begin{equation}
\mathbf{T}=\exp{(\mathbf{Q})} = \sum \frac{\mathbf{Q}^n}{n!} ,
\end{equation} 
where $\mathbf{Q}$ is the generator of $\mathbf{T}$.
Since $e^{\mathbf{P}^\top \mathbf{Q} \mathbf{P}} = \mathbf{P}^\top
e^{\mathbf{Q}} \mathbf{P} = \mathbf{P}^\top \mathbf{T} \mathbf{P}$, we
only need to prove that $\mathbf{Q}^{\prime} = \mathbf{P}^\top
\mathbf{Q} \mathbf{P}$ is a valid generator, i.e.~it must have zero
row-sums and non-negative off-diagonal entries. 

Since $\mathbf{Q}$ is a valid generator one has
\begin{eqnarray}
\sum_j Q^{\prime}_{ij} &=& \sum_j \sum_k \sum_l P^\top_{il} Q_{lk} P_{kj} \cr
   &=& \sum_k \sum_l P^\top_{il} Q_{lk} (\sum_j P{kj}) \cr
   &=& \sum_k \sum_l P^\top_{il} Q_{lk} \cr
   &=&  \sum_l P^\top_{il} \sum_k Q_{lk}   \cr
   &=&  \sum_l P^\top_{il}\times 0 = 0 \, .
\end{eqnarray}

To prove that matrix $\mathbf{Q}^{\prime}$ has non-negative off-diagonal
entries we write for $k \neq l$ the off-diagonal entry
$Q_{kl}^{\prime} = \sum_{nm} P_{kn} Q_{nm} (P^{\top})_{ml}$ and note that,
since the matrix $\mathbf{P}$ has only one non-zero element per column 
and per row. Thus, being that column $i$ and row $j$, one has
$Q_{kl}^{\prime} = P_{ki}Q_{ij} (P^{\top})_{jl}$. 

If $k \neq l$ and $i=j$, then $ P_{ki} = 1$ and $ (P^{\top})_{il} =
P_{li} = 1$ which contradicts the fact that $\mathbf{P}$ is a
permutation matrix. Thus, if $k \neq l$ then $i\neq j$, and so 
there is a direct correspondence between off-diagonal
elements of $\mathbf{Q}^{\prime}$ and those of $\mathbf{Q}$:
if all $Q_{ij}$ are non-negative so are all $Q_{kl}^{\prime}$.

Conversely, if $\mathbf{A}$ is embedabble, one just writes
$\mathbf{T}=(\mathbf{P}^\top)^{-1}\mathbf{A}\mathbf{P}^{-1}=
(\mathbf{P}^\prime)^\top\mathbf{A}(\mathbf{P}^{\prime})^\top$
with $\mathbf{P}^\prime=\mathbf{P}^{-1}$ and applies the same arguments
as above.
\end{proof}

The second proposition uses renormalization and transposition of the
``original'' transition matrix:
\begin{proposition}
Let $\mathbf{T}$ be a transition matrix and consider $\mathbf{B} =
\mathbf{D} \mathbf{T}^\top $, where $\mathbf{D}$ is the diagonal
matrix $D_{ii} = ( \sum_j T^\top_{ij} )^{-1}$. 
$\mathbf{T}$ is  embeddable if and only if $\mathbf{B}$ is also embeddable.
\label{time-reversion}
\end{proposition}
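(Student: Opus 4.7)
The plan is to prove the biconditional by explicitly constructing a generator of $\mathbf{B}$ from one of $\mathbf{T}$, exploiting the fact that $\mathbf{B}$ admits a natural probabilistic interpretation as the transition matrix of the time-reversed continuous Markov process associated with $\mathbf{T}$. Indeed, a direct Bayes computation under a uniform initial distribution yields $B_{ij}=T_{ji}/\sum_k T_{ki}=\Pr(X_0=j\,\vert\,X_\tau=i)$. Since time-reversals of continuous Markov processes are themselves continuous Markov processes, embeddability of $\mathbf{T}$ should transfer to embeddability of $\mathbf{B}$, and the argument amounts to making this explicit in the style of the preceding propositions.

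First I would embed $\mathbf{T}$ as $\mathbf{T}=\mathbf{T}(t,\tau_0)$ in a one-parameter family $\mathbf{T}(t,\tau)$ satisfying Eq.~(\ref{MC1}) with $\mathbf{T}(t,0)=\mathbf{I}$ and inhomogeneous generator $\mathbf{Q}(s)$, and introduce the companion family $\mathbf{B}(t,\tau)=\mathbf{D}(t,\tau)\mathbf{T}(t,\tau)^\top$ with $D_{ii}(t,\tau)=[\sum_j T_{ji}(t,\tau)]^{-1}$. By construction each $\mathbf{B}(t,\tau)$ is a transition matrix and $\mathbf{B}(t,0)=\mathbf{I}$. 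Differentiating in $\tau$ and using the master equation for $\mathbf{T}$ would give $\partial_\tau\mathbf{B}=\tilde{\mathbf{Q}}(t,\tau)\mathbf{B}$ for an explicit matrix $\tilde{\mathbf{Q}}$ built out of $\mathbf{Q}$ and the column sums of $\mathbf{T}$.

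Next I verify that $\tilde{\mathbf{Q}}(s)$ is a valid generator. Zero row sums are immediate: differentiating $\mathbf{B}(t,\tau)\mathbf{1}=\mathbf{1}$ yields $\tilde{\mathbf{Q}}\mathbf{B}\mathbf{1}=\tilde{\mathbf{Q}}\mathbf{1}=\mathbf{0}$. Non-negativity of the off-diagonal entries is the subtle step: the algebraic expression for $\tilde{\mathbf{Q}}$ implicitly involves $\mathbf{B}^{-1}$ and so is not manifestly positive. The cleanest route is the probabilistic one above, noting that $\mathbf{B}(t,\tau)$ is row-stochastic for every $\tau\ge 0$ with $\mathbf{B}(t,0)=\mathbf{I}$, which forces $\partial_\tau B_{ij}(t,0^+)\ge 0$ for $i\neq j$ and hence $\tilde Q_{ij}(t,0)\ge 0$; extending to arbitrary $\tau$ uses the semigroup-like composition of successive infinitesimal transition matrices.

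The converse direction proceeds analogously: the map $\mathbf{T}\mapsto\mathbf{B}$ is invertible on the relevant domain, since given a row-stochastic $\mathbf{B}$ one recovers $\mathbf{T}$ via $T_{kj}=c_j B_{jk}$, where the positive vector $\mathbf{c}$ is uniquely determined by the row-stochasticity constraint $\mathbf{B}^\top\mathbf{c}=\mathbf{1}$. Applying the same forward construction with the roles of $\mathbf{T}$ and $\mathbf{B}$ interchanged then produces a generator for $\mathbf{T}$ from a generator of $\mathbf{B}$. The principal obstacle I anticipate is controlling the off-diagonal non-negativity of $\tilde{\mathbf{Q}}(s)$ uniformly in $\tau$, since the nonlinear normalization $\mathbf{D}(t,\tau)$ produces a coupling that is transparent at $\tau=0$ but intricate at finite $\tau>0$.
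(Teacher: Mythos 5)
Your overall strategy --- read $\mathbf{B}$ as the transition matrix of the time-reversed chain started from the uniform distribution and transfer the generator through the reversal --- is the same idea underlying the paper's proof (which likewise notes that $\mathbf{B}$ gives the probabilities \emph{from} which a state has transitated and then appeals to the characterisation of embeddability via time-continuity of the family of stochastic matrices). You try to make the generator explicit, which is more ambitious, but the two places where you stop short are exactly where the work is. First, off-diagonal non-negativity of $\tilde{\mathbf{Q}}$ at finite $\tau$: your argument only covers $\tau=0^{+}$, and you correctly observe that $\partial_\tau\mathbf{B}\cdot\mathbf{B}^{-1}$ is not manifestly positive because $\mathbf{B}^{-1}$ is not stochastic. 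The missing step is the Markov property: writing $B(\tau)_{ij}=\Pr(X_0=j\mid X_\tau=i)$, conditioning on the state at the intermediate time gives $\mathbf{B}(\tau+h)=\mathbf{C}(\tau,\tau+h)\,\mathbf{B}(\tau)$ with $C(\tau,\tau+h)_{ik}=\Pr(X_\tau=k\mid X_{\tau+h}=i)$, which is itself a row-stochastic matrix tending to $\mathbf{I}$ as $h\to 0$; hence $\tilde{\mathbf{Q}}(\tau)=\lim_{h\downarrow 0}h^{-1}\left(\mathbf{C}(\tau,\tau+h)-\mathbf{I}\right)$ has non-negative off-diagonal entries at \emph{every} $\tau$, not just at $0$. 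Equivalently, the reversed rates are $\tilde{q}_{ij}(s)=\left(p_j(s)/p_i(s)\right)q_{ji}(s)$ for $i\neq j$, with $p(s)$ the marginal law started uniform, which is manifestly non-negative since $p(s)>0$. Without one of these observations the forward direction is not proved.

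Second, the converse does not follow by ``applying the same forward construction with the roles interchanged'': the map $\mathbf{T}\mapsto\mathbf{D}\mathbf{T}^{\top}$ is not an involution. Applying it to $\mathbf{B}=\mathbf{D}\mathbf{T}^{\top}$ returns $\mathbf{D}''\mathbf{T}\mathbf{D}$ for a new row-normalisation $\mathbf{D}''$, and this equals $\mathbf{T}$ only in special cases (a $2\times 2$ matrix with unequal column sums already gives a different matrix). The fix is hidden in your own recovery formula $T_{kj}=c_jB_{jk}$ with $\mathbf{B}^{\top}\mathbf{c}=\mathbf{1}$: you must reverse the $\mathbf{B}$-process \emph{with respect to the initial distribution proportional to} $\mathbf{c}$ rather than the uniform one; a short Bayes computation then shows that this non-uniform reversal reproduces $\mathbf{T}$ exactly, and the same reversed-rate argument as in the forward direction applies. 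Both gaps are fixable, and once patched your constructive route is arguably more complete than the paper's own proof, which sidesteps the generator construction entirely by invoking the equivalence between embeddability and continuity of a family of stochastic matrices and asserting that $\mathbf{B}(t,\tau)$ inherits that continuity.
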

\begin{proof}
It is easy to see that if $\mathbf{T}$ is a transition matrix so is
$\mathbf{B}$, since $\mathbf{B}$ is always normalized to have row-sums
one, and if $\mathbf{T}$ has all its elements non-negative, so has
$\mathbf{B}$. 
Notice that, while $T$ yields the probabilities to which 
a present states transitates, $B$ gives the probabilities {\it from} which
a state has transitated.
It was proven that for a fixed time $t$, a matrix $\mathbf{T}(t,\tau)$
has all its entries non-negative, $T_{ij}(t,\tau) > 0$, for all $\tau$
and is time-continuous, i.e.~for any $\epsilon>0$ there is one
$\delta$ for which, if $\vert  \tau_1 - \tau_2 \vert < \delta$ then
$\vert \vert \mathbf{T}(t,\tau_1) - \mathbf{T}(t,\tau_2) \vert \vert <
\epsilon$ if and only if there is a valid generator associated with
$\mathbf{T}(t,\tau)$. 
Since $\mathbf{B}$ is the product of two matrices that are time
continuous, $\mathbf{B}$ is also time-continuous.
\end{proof}

\section{Computational implementation: how ``embeddable'' is a matrix?}
\label{sec:implementation}

The mathematical conditions for the existence of a homogeneous 
generator from the embedding problem are useful more at a theoretical 
than at a computational level. 
They give a bivalent result that does not take into consideration 
neither noise generated from finite samples nor how distant an empirical 
process is from having a constant generator.

In this section we will describe how to adapt our mathematical results
to be meaningful to empirical transition matrices in real situations.
First, we evaluate how embeddable a transition matrix is, we define in 
Sec.~\ref{subsec:metrics} a proper metric for each proposition above
that measures how ``close'' the empirical transition matrix is from 
satisfying the corresponding proposition.
Then, in Sec.~\ref{subsec:Qestimate}, if 
one arrives to the conclusion that the transition 
matrix is indeed embeddable we describe proper ways to model its corresponding 
generator.

There are several differences between the time-homogeneous and the
time-inhomogeneous problem:
\begin{itemize} 
\item In the time-inhomogeneous problem there is no finite set of possible
generators, as is usually the case in the time-homogeneous
counterpart, namely when the transition matrix has no repeated 
eigenvalues\cite{Israel2001}. 
If there is a non-homogeneous generator, then
there is an infinite number of them.
\item The product of two homogeneous embeddable matrices might not be
time-homogeneous embeddable, whereas the product of two
time-inhomogeneous matrices is always embeddable.
\item In the inhomogeneous case, the existence of a real-valued
  logarithm is not a necessary condition for being embeddable.
\item The necessary conditions of the time-homogeneous problem
  concerning the eigenvalues of the transition matrix,
  Props.~\ref{nEigenprop1} and \ref{nEigenprop0},  are not necessary
  conditions for the time-inhomogeneous problem.  
\end{itemize}

As an illustrative example consider the matrix:
\begin{equation}
\mathbf{T}_E=\left [
\begin{array}{ccc}
0.1179 & 0.0890 & 0.7931 \\
0.0100 & 0.1000 & 0.8900 \\
0.8901 & 0.0010 & 0.1089 \\
\end{array}
\right ],
\end{equation}
The matrix $\mathbf{T}_E $ is, according to Prop.\ref{LU},
time-inhomogeneous embeddable, since it
is a product of matrices that have a positive LU decomposition.
However it is not time-homogeneous embeddable, since 
it has distinct negative eigenvalues, $\{1, -0,001490, -0,671710 \}$,
and thus it has no real-valued logarithm\cite{higham2008functions}.
Moreover, the conditions in both Props.~\ref{nEigenprop1} and
\ref{nEigenprop0} are not fulfilled.

Regarding Prop.~\ref{ncondition1}, we have shown that conditions 
a) and b) are necessary conditions for the more general case of 
time-inhomogeneous generators.
As for condition c), one can show that there is also 
a limit number of zero entries for the time-inhomogeneous case.
See Prop.~\ref{necessary3} in Append.~\ref{append:results}.


Before proceeding, two important remarks.
First, it is necessary to describe
how to estimate the transition matrix 
directly from data series and then explain how to resample the transition 
matrix which will be necessary for evaluating if it is embeddable or not. 
Among several algorithms\cite{Burrige1995,jafry2004measurement},
we concentrate in the so-called "Cohort Method", which
counts the number $N_{kj}$  of transitions from state $k$ 
to state $j$ in the desired time-interval $[t,t+\tau]$, 
defining the entries of the transition matrices as
\begin{equation}
T_{kj}(t) = \frac{N_{kj}(t)}{\sum_j N_{kj}(t)}\, ,
\label{cohort_method}
\end{equation} 
with the associated error
\begin{equation}
\sigma_{T_{kj}} = \sqrt{\frac{T_{kj}(1-T_{kj})}{N_{kj}}}.
\label{cohort_method_err}
\end{equation}

Second, in order to implement the set of propositions with an
associated statistical error,
we propose a method of resampling a given empirical transition matrix 
$\mathbf{T}(t, \tau)$.
The set of resampling matrices obtained is then used to quantify the 
error associated to the estimates on the transition matrix:
each metric that is applied to the empirical transition matrix
retrieves a set of metric values when applied to the full set of resampling
matrices, and the standard deviation of that value distribution is then 
taken as the error or uncertainty associated to the metric estimation.

More specifically, one generates number series from the distribution of
states $P(X,t)$ at time $t$ till the distribution $P(X,t+\tau)$ at $t+\tau$,
and estimates the corresponding resampling matrix
through the Cohort Method. See Eq.~\eqref{cohort_method}.

\subsection{Embeddability metrics}
\label{subsec:metrics}

The propositions of the embedding problem do not take in consideration the 
uncertainty in the estimation of $\mathbf{T}$, and thus we need to develop 
methods to determine, beyond statistical uncertainty, whether a generator 
exists or not. 
Notice that the embedding problem determines only if the process can be 
modeled as a time-continuous Markov process, but it cannot guarantee if 
the underlying process actually is a time-continuous Markov process. Thus 
we will use a proper null hypothesis, which if not rejected, one assumes
that a suitable generator can be estimated.
In the case of Props.~\ref{necessary1} and \ref{necessary2} the null 
hypothesis states that a generator exists, while for Prop.~\ref{LU} the 
null hypothesis states that such a generator does {\it not} exist.
The null hypothesis is tested for each proposition separately.

To evaluate if the condition of Prop.~\ref{necessary1} is fulfilled 
for a given transition matrix $\mathbf{T}$, we compute the following 
quantity,
\begin{equation}
d_{N_1} = -\frac{\det(\mathbf{T})}{\sigma_{det}} \, ,
\label{distanceN1}
\end{equation}
where $\sigma_{det}$ is the standard deviation from the sample of the 
determinants calculated for each resampling matrix.
If $d_{N_1} \geq 2$, we assume that the determinant of $\mathbf{T}$ is 
negative and the distribution of the resampled determinants are all negative 
within two standard deviations. In this case we reject the null hypothesis,
i.e.~no generator exists.

Regarding the condition in Prop.~\ref{necessary2}, we use the following 
metric,
\begin{equation}
d_{N_2}= -\frac{\prod_{i} T_{ii} - \det(\mathbf{T})}
              {\sigma_{prod}+\sigma_{det}} \, ,
\label{distanceN2}
\end{equation}
where $\sigma_{prod}$ is the standard deviation 
associated to the variable $\prod_i T_{ii}$
according to the expression in Eq.~(\ref{cohort_method_err}).
Again, if $d_{N_2} \geq 2$, then no generator exists. 

Concerning the sufficient condition
of the $LU$-decomposition with non-negative elements, 
Prop.~\ref{LU}, we can use the following distance:
\begin{equation}
d_{S_1}= \min \{ m_L, m_U \} \, ,
\label{distanceS1}
\end{equation} 
with
\begin{subequations}
\begin{eqnarray}
m_L &=& \min_{i,j} \left \{ \tfrac{L_{ij}}{\sigma_{L_{ij}}} 
                  \right \} ,\label{minL}\\
m_U &=& \min_{i,j} \left \{ \tfrac{U_{ij}}{\sigma_{U_{ij}}} 
                  \right \} ,\label{minU}
\end{eqnarray}
\end{subequations}
where $L_{ij}$ and $U_{ij}$ represent the entries of the matrices $\mathbf{L}$ 
and $\mathbf{U}$  respectively, and $\sigma_{Lij}$ and $\sigma_{Uij}$ their
corresponding standard deviations. 
The quantities $\sigma_{L_{ij}}$ and $\sigma_{U_{ij}}$ are calculated solving 
the same system of equations of the $LU$ decomposition, but using the 
uncertainties in the estimation of $T_{ij}$ with the rules of error 
propagation.
If $d_{S_1}>2$, then we reject the null hypothesis, i.e.~we assume that 
a generator exists.

Applying these three metrics to one transition matrix, if the null hypothesis
cannot be rejected, we estimate a generator matrix as describe in the 
Sec.~\ref{subsec:Qestimate}. 
To ascertain if the estimated generator matrix yields a transition matrix
sufficiently close to the empirical transition matrix, we use it
to generate auxiliary transition matrices $\tilde{\mathbf{T}}$.
If the auxiliary matrices are typically close to the empirical 
transition matrix $\mathbf{T}$ we assume that the estimate is
good.
To that end, we introduce one additional metric to assert if the matrix 
$\mathbf{T}$ is close enough to a auxiliary matrix, $\tilde{\mathbf{T}}$,
originated from a time-continuous Markov process with a generator 
$\mathbf{Q}(t)$, is to compute the quantity,
\begin{equation}
d_{est} = \frac{1}{R} \sum_{k=1}^{R} 
         \Theta\left ( 
                 \vert\vert 
                 \mathbf{T}-\tilde{\mathbf{T}}
                 \vert\vert_F 
                 -
                 \vert\vert  
                 \mathbf{T}^{\prime}-\tilde{\mathbf{T}}
                 \vert\vert_F 
               \right ) \, ,
\label{distanceNS1}
\end{equation} 
where $R$ is the number of auxiliary matrices,
$\Theta(x)$ is the Heaviside function and
$\vert\vert \mathbf{X} \vert\vert_F=\left ( \sum_{i=1}^n \sum_{j=1}^n
  X_{ij}^2 \right )^{1/2}$
is the Frobenius norm of matrix $\mathbf{X}$.
We assume that the empirical process, observed for the estimation 
$\mathbf{T}^{\prime}$ is not close to the time-continuous Markov process 
with a transition matrix $\tilde{\mathbf{T}}$ if $d_{est} < 0.10$, i.e.~if 
less than $10\%$ of the auxiliary matrices are outside a confidence interval
with significance value $p=d_{est}$.

If the distance $d_{est}$ is too small a new matrix is generated within 
the conditions of Props.~\ref{basis-permutation} and \ref{time-reversion}. 
In case that the new matrices pass the tests above, these propositions 
guarantee that the original matrix also passes.
\begin{table}
\begin{center}
\begin{tabular}{|c|ccc|} 
\hline 
Metric & $d_{N1}$ & $d_{N2}$ & $d_{S1}$ \\
       & (Prop.\ref{necessary1}) & (Prop.~\ref{necessary2}) & (Prop.~\ref{LU}) \\ 
\hline\hline 
$>2$   & 166 & 199 & 178 \\ 
$<2$   & 34 & 1 & 22 \\ 
\hline
\end{tabular}
\end{center}
\caption{\protect 
         Test results of the inhomogeneous framework detection for a 
         set of $200$ samples, each one with $10^4$ points.
         When one of the metrics is larger than two, the null hypothesis
         cannot be rejected.}
\label{tab01}
\end{table}

To test all the metrics introduced above we generate a set of $200$
samples of $10^4$ points, each one from a different inhomogeneous 
transition matrix, as described below.
We then compute numerically the transition matrix
from each sample and apply all three metrics $d_{N1}$, $d_{N2}$
and $d_{S1}$.
The results, summarized in Tab.~\ref{tab01}, clearly show that
in at least $80\%$ of the cases the framework is able to 
correctly detect the inhomogeneity of an existing generator.

\subsection{Modeling the generator matrix $\mathbf{Q}(t)$}
\label{subsec:Qestimate}

In case the null hypothesis cannot be rejected (i.e.~that a valid 
generator exists), we then derive an estimate 
$\mathbf{Q}(t)$ able to model the empirical process. Unlike the case of 
the time-homogeneous embedding problem, here we need to estimate a matrix 
which changes in time and therefore a different procedure is necessary.
In general, for deriving an inhomogeneous generator, one solves the 
Peano-Baker series Eq.~(\ref{general_solution}). 
Assume $\mathbf{Q}(t)$ can be modeled as a polynomial of degree $N$, i.e. 
\begin{equation}
\mathbf{Q}(t) = \sum_n^N \mathbf{B}_n t^n \, ,
\label{poly_gen}
\end{equation} 
where each matrix $\mathbf{B}_n$ is constant over time. 
Naturally, we need to make sure that no off-diagonal entry in 
$\mathbf{Q}(t)$ ever become negative in $t \in [0,1]$.
Introducing Eq.~(\ref{poly_gen}) in Eq.~(\ref{general_solution}) 
yields
\begin{equation}
\mathbf{T} = \sum_{k=0}^{\infty} \prod_{l=1}^{k} \sum_{n=0}^N  \frac{\mathbf{B}_n}{l+\sum_{m=1}^l n_m} .
\label{poly_gen_forT}
\end{equation} 
To invert Eq.~(\ref{poly_gen_forT}) however is very cumbersome and 
computationally expensive.
In this subsection, we propose an alternative for estimating inhomogeneous 
generators that is accurate and easily implementable. 


Our procedure is based in the assumption that the original transition
matrix is a product of a finite number of embeddable matrices, 
$\mathbf{T} = \mathbf{T}_1^{\ast} \dots \mathbf{T}_n^{\ast}$ 
with each $\mathbf{T}_i^{\ast}$ ($i=1,\dots,n$) having an homogeneous generator.

One starts with a decomposition of the form
\begin{equation}
\mathbf{T} = \mathbf{A}_1   \dots \mathbf{A}_n \mathbf{T}_0^{\ast} 
             \mathbf{A}_{n+1} \dots \mathbf{A}_{2n} , 
\end{equation}
where $\mathbf{A}_i$ are embeddable
matrices having one off-diagonal positive term.
The objective here is to find an embeddable matrix $\mathbf{T}_0^{\ast}$ 
from the empirical matrix $\mathbf{T}$ through the multiplication by 
matrices $\mathbf{A}_i$.
If $\mathbf{T}=e^{\mathbf{Q}}$ and $\mathbf{Q}$ has one negative off-diagonal 
entry, $Q_{ij}< 0$, we can try ``correct'' that entry by multiplying 
$\mathbf{T}$ by two matrices, $\mathbf{A}_l$ and $\mathbf{A}_{l+n}$, such 
that $(A_l)_{ik} > 0$ and  $(A_{l+n})_{kj} > 0$ for a fixed index $k$.
Intuitively, if there are transitions from a state $k$ to a state 
$j$ and only afterwards from another state $i$ to state $k$, 
a time-inhomogeneous process might correspond
to a logarithm with a negative off-diagonal entry if $Q_{ij} < 0$.
Hence, one derives a first estimate $\mathbf{T}_0^{\ast}$ of the transition
matrix $\mathbf{T}$.
In case there is more than one negative off-diagonal element of $\mathbf{Q}$
one proceeds similarly for each element separately.

The algorithm proceeds then as follows:
\begin{enumerate}
\item Compute $\mathbf{Q}_0^{\ast} = \log{\mathbf{T}_0^{\ast}}$ and verify it 
is a valid generator. Note that, during the algorithm we must always use 
the same branch of the complex logarithm. 
\item If the generator is not valid, i.e.~it has at least one negative
off-diagonal entry 
$(Q_0^{\ast})_{ij}$, one finds a suitable integer $k$ for which two 
matrices, $\mathbf{A}_1$ and $\mathbf{A}_{n+1}$, have entries 
$(A_1)_{kj}>0$  and $(A_{n+1})_{ik}>0$. 
\item One considers the new estimate $\mathbf{T}_1^{\ast}=\mathbf{A}_1
\mathbf{T}^{\ast}_0\mathbf{A}_{n+1}$ and computes the generator estimate 
$\mathbf{Q}_1^{\ast} = \log{\mathbf{T}_1^{\ast}}$ and verifies if it is 
now a valid generator. 
\item One proceeds recursively until for a certain recursive step $i$
$\mathbf{Q}_i^{\ast} = \log{\mathbf{T}_i^{\ast}}$ has no negative
off-diagonal entries.
\item The final estimate at step $i$ is identified as the $k$-factor
$\mathbf{T}_k^{\ast}$ in the assumed decomposition
$\mathbf{T} = \mathbf{T}_1^{\ast} \dots \mathbf{T}_n^{\ast}$.
\item One computes now $\mathbf{T}_{k+1}^{\ast}=(\mathbf{T}_1^{\ast} 
\dots \mathbf{T}_k^{\ast})^{-1}\mathbf{T}$ and repeats the procedure.
\item The full algorithm ends when the last estimated matrix in the
decomposition is either an embeddable matrix or a matrix sufficiently
close to the identity. More specifically, when the matrix norm of
the difference between the matrix and identity matrix is at least one
order of magnitude smaller than the matrix norm of the estimated matrix.
Alternatively, when the number of iterations exceeds a pre-fixed 
maximum number of iterations, typically a few thousand, the algorithm 
stops.
\end{enumerate}

We tested $1000$ matrices with principal logarithms having only one negative 
off-diagonal entry and a valid generator was found $945$ times. 
If the number of negative entries is not too large at each step of the
recursive procedure above ($<n^2$) similar results are obtain, which 
indicates an accuracy of around $90$ and $95\%$.
\begin{figure}[t]
\centering
\includegraphics[width=0.46\textwidth]{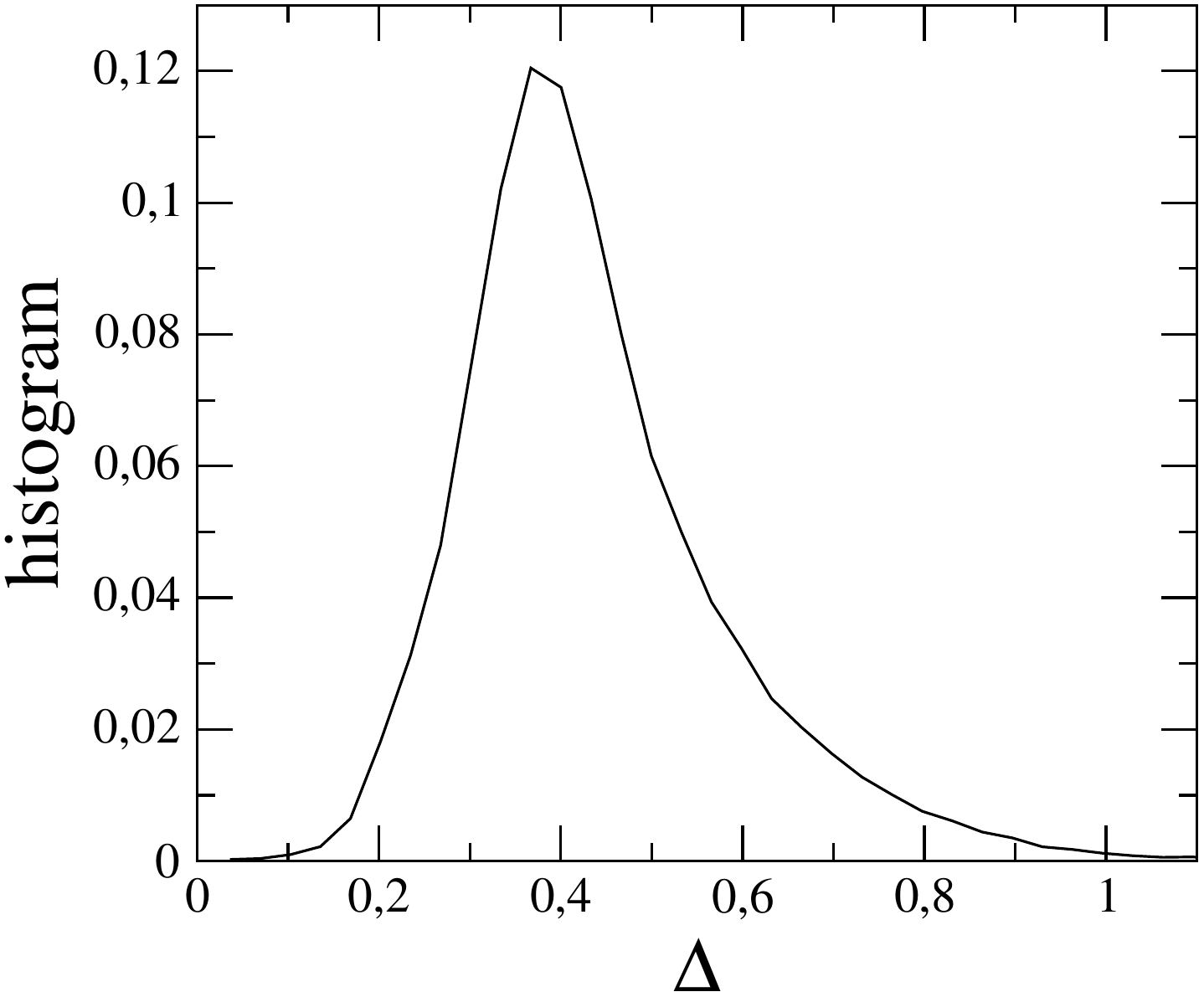}
\caption{\protect
         Histogram of $\Delta$ values, Eq.~(\ref{distance}), 
         from a sample of $43222$ matrices (see text).}
\label{fig02}
\end{figure}

In Sec.~\ref{subsec:metrics} we generated matrices with an inhomogeneous 
generator $\mathbf{Q}(t) = \mathbf{Q}_0 + \mathbf{Q}_1(t)$,
integrated them in order to compute a transition matrix $\mathbf{T}(t)$
and produce data series for testing our framework. 
Here, we use the subset of matrices that were correctly detected as
time-inhomogeneous embeddable and estimate one generator as described 
above.
A valid generator was found in around $90\%$ of the generated samples.

To evaluate the accuracy of the estimates, we compare 
the modeled transition matrix $\mathbf{T}_{mod}(t,\tau)$ 
with the empirical one, $\mathbf{T}_{emp}(t,\tau)$.
The comparison is based in a normalized distance given by the fraction
of the matrix norm of the difference between both matrices and the
matrix norm of the difference between the modeled matrix and the
identity matrix (initial state):
\begin{equation}
\Delta = \frac{\vert\vert\mathbf{T}_{mod}(t,\tau /2) -
  \mathbf{T}_{emp}(t,\tau /2) \vert\vert_F}{\vert\vert
  \mathbf{T}_{mod}(t,\tau /2) - \mathbf{Id} \vert\vert_F}  ,
\label{distance}
\end{equation}
where $\vert\vert \cdot\vert\vert_F$ is the Frobenius norm.
Figure \ref{fig02} shows an histogram of computed values of the
normalized distance $\Delta$ in Eq.~(\ref{distance}) for all
estimates. Typically the deviations are not larger than $40\%$ of the
deviations from the initial state, where no transition occur. 

Such procedure closes the computational framework for uncovering Markov 
continuous processes from empirical data sets.

\section{Discussion and Conclusions}
\label{sec:conclusions}

We have extend some theoretical results on the inhomogeneous embedding
problem and established a framework which can evaluate empirical data
for detecting the existence of continuous Markov processes.
Eight new proposition were presented and demonstrated concerning the
general case of processes having a time-inhomogeneous generator.
While it was also recently proven that the problem of deriving a
general algorithm capable of solving the embedding problem for any
finite dimension $n$ is NP-hard\cite{NPhard}, 
our implimented algorithm presents acceptable results:
when applied to synthetic data generated from pre-given generators
our framework is able to detect at least $80\%$ of them and, moreover,
returns a good estimate of the generator underlying the data set.
Thus, our algorithm enables one to find a time-inhomogeneous generator of
any transition matrix with a real-valued logarithm.

Concerning the new proposition demonstrated above for inhomogeneous
transition matrices, there are e.g.~some extensions of the $LU$
decomposition theorem, Prop.~\ref{LU}, that can be interesting for future work. 
Namely, the quasi LU decomposition\cite{bueno2007minimum}, the ULU
decomposition\cite{toffoli1997almost} and the 
LULU factorization\cite{strang1997every}. 

This framework is now able to be straightforwardly applied to specific
sets of data for evaluating hidden continuous Markov processes. 
Indeed, since the transition matrix defines a
specific Markov chain, our framework can be taken as a possibility
for accessing continuous hidden processes in (time-dependent)
Markov chains found in e.g.~models for polymer growth processes or
enzyme activity.

For specific applications, our framework can be used for three types
of stochastic data sets: (i) one where only the initial and final
configuration of the system is given; (ii) one where all possible
state transitions are defined through a probability value between the
start and end of the observation period and (iii) the transition
between the beginning of intermediate instants till the end of 
the observation. In this paper we dealt typically with type (ii) data
sets, while in previous works\cite{lencastre1,lencastre2} we
considered mainly type (iii).  
Type (i) is typically not well-defined and additional cautions must be
taken. 

One important interdisciplinary application is, of course, in
economics and finance, when addressing rating matrices:
if ratings do indeed reflect a natural (continuous) economic
process, the extracted rating matrices must have a proper 
generator\cite{Metz2007} .
This problem as already addressed by us\cite{lencastre1,lencastre2}
in the particular case of homogeneous transition matrices derived by 
rating agencies.
Further, our methodology could be extended to other situations where
correlation matrices are taken for describing the macroscopic
state of a financial system\cite{yuri}. 
With a proper normalization such correlation matrices can be taken,
in an algebraic sense, as transition matrices and therefore the
framework described above is applicable.

\section*{Acknowledgments}

The authors thank 
{\it Deutscher Akademischer Austauschdienst} (DAAD) and 
{\it Funda\c{c}\~ao para a Ci\^encia e a Tecnologia} (FCT) for
support from bilateral collaboration DRI/DAAD/1208/2013.
FR thanks FCT for the fellowship SFRH/BPD/65427/2009. 
TR acknowledges support from the Royal Society.
PGL thanks German Environment Ministry for financial support.

\appendix
\section{Additional results on the time-inhomogeneous embeddable problem}
\label{append:results}

Here we present additional results concerning the existence
of inhomogeneous generators. These results serve for proving the 
theorems implemented above and provide theoretical consistency
to our framework.
The first result is a sufficient condition concerning a possible
decomposition of transition matrices:

\begin{proposition}
If $\mathbf{T}$ is an $n$-dimensional triangular transition matrix, 
then it has an inhomogeneous generator, which can be defined from a 
decomposition of the transition matrix as $\mathbf{T}=
e^{\mathbf{Q}_1}\cdots e^{\mathbf{Q}_{n-1}}$
where $\mathbf{Q}_i$ are time-homogeneous generators of some elementary 
transition matrix.
\label{prop:append1}
\end{proposition}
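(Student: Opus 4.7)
The plan is to exhibit an explicit factorization $\mathbf{T}=e^{\mathbf{Q}_{n-1}}\cdots e^{\mathbf{Q}_1}$ in which each $\mathbf{Q}_i$ is a homogeneous generator of a simple ``one-row'' elementary transition matrix, and then convert the factorization into a genuine time-inhomogeneous generator through a piecewise-constant Peano-Baker construction. Without loss of generality I take $\mathbf{T}$ upper triangular with $T_{ii}>0$; the lower-triangular case follows from Prop.~\ref{basis-permutation} applied with the order-reversing permutation, and the boundary case $T_{ii}\in\{0,1\}$ is handled either trivially (an identity row needs no factor) or by a limiting argument.

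For each $i\in\{1,\dots,n-1\}$ I introduce the elementary matrix $\mathbf{S}_i=\mathbf{I}+e_i v_i^{\top}$, which agrees with $\mathbf{T}$ on row $i$ and with the identity on every other row; here $v_i$ collects the entries $T_{ij}-\delta_{ij}$ and is therefore supported on indices $\{i,i+1,\dots,n\}$. I would then verify that $\mathbf{Q}_i=\bigl[-\log T_{ii}/(1-T_{ii})\bigr]\,e_i v_i^{\top}$ is a valid homogeneous generator: its row sums vanish, its off-diagonal entries are non-negative because $-\log T_{ii}\ge 0$ on $(0,1]$, and it satisfies the rank-one-like identity $\mathbf{Q}_i^{\,k}=(\log T_{ii})^{k-1}\mathbf{Q}_i$ for every $k\ge 1$. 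Summing the exponential series collapses to $\exp(\mathbf{Q}_i)=\mathbf{I}+(T_{ii}-1)/\log T_{ii}\cdot \mathbf{Q}_i=\mathbf{S}_i$ by a direct calculation matching the diagonal entry $T_{ii}$ and the off-diagonal entries $T_{ij}$.

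The key structural identity $\mathbf{T}=\mathbf{S}_{n-1}\mathbf{S}_{n-2}\cdots \mathbf{S}_1$ is then proved by induction using $\mathbf{S}_i\mathbf{S}_j=\mathbf{I}+e_i v_i^{\top}+e_j v_j^{\top}+(v_i^{\top}e_j)\,e_i v_j^{\top}$. Since the product is taken in strictly decreasing order, every cross term that arises in the multilinear expansion involves a factor $v_i^{\top}e_j$ with $i>j$, which vanishes because $v_i$ is supported on indices $\ge i$. All quadratic and higher cross terms therefore drop out and the product collapses to $\mathbf{I}+\sum_{i=1}^{n-1}e_i v_i^{\top}=\mathbf{T}$.

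To finish I would concatenate the homogeneous generators into a piecewise-constant time-dependent generator $\mathbf{Q}(s)$ equal to $\mathbf{Q}_i$ on the $i$-th subinterval of $[t-\tau,t]$; the Peano-Baker series, Eqs.~(\ref{general_solution})-(\ref{general_solution2}), applied to this $\mathbf{Q}(s)$ reproduces exactly the ordered product $e^{\mathbf{Q}_{n-1}}\cdots e^{\mathbf{Q}_1}=\mathbf{T}$, delivering the claimed inhomogeneous generator. The main obstacle I anticipate is the combinatorial bookkeeping needed to confirm that the triangular support of the $v_i$'s kills every higher-order cross term in the expanded ordered product; a secondary subtlety is the boundary regime $T_{ii}\to 0$, where the elementary generator becomes singular and must be reached by a limit from strictly positive diagonals.
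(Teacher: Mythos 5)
Your proof is correct, but it takes a genuinely different route from the paper's. The paper argues by induction on the dimension, peeling off the \emph{last column}: it writes $\mathbf{T}$ in block form with last row $(\underline{0}^{\top},1)$, extracts an explicitly embeddable factor $\bigl(\begin{smallmatrix}\mathbf{I}-\mathrm{diag}(\underline{a}) & \underline{a}\\ \underline{0}^{\top} & 1\end{smallmatrix}\bigr)$ whose generator has $\mathrm{diag}(\log(\underline{1}-\underline{a}))$ on the diagonal and $-\log(\underline{1}-\underline{a})$ in the last column, and leaves behind a renormalized $(n-1)$-dimensional triangular transition matrix $\mathbf{T}^{\prime}$ to which the induction hypothesis applies. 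You instead factor $\mathbf{T}$ in one shot into \emph{row-elementary} matrices $\mathbf{S}_i=\mathbf{I}+e_iv_i^{\top}$, exponentiate each in closed form via the rank-one identity $\mathbf{Q}_i^{\,k}=(\log T_{ii})^{k-1}\mathbf{Q}_i$, and kill the cross terms in the ordered product using the triangular support of the $v_i$. Your route buys explicit formulas for all $n-1$ generators at once and replaces the induction with a short algebraic verification; the paper's route buys a recursion that never needs the multilinear-expansion bookkeeping you worry about (which, for the record, does go through exactly as you describe: every cross term contains a scalar $v_{i}^{\top}e_{j}$ with $i>j$, hence vanishes). Both arguments share the same unstated hypothesis, namely $T_{ii}>0$ for all $i$: if some $T_{ii}=0$ then $\Det(\mathbf{T})=\prod_i T_{ii}=0$ and, by the determinant identity in Prop.~\ref{necessary1}, no (finite) generator can exist at all, so your proposed ``limiting argument'' for that boundary case cannot work --- embeddability is not closed under such limits --- and the case should simply be excluded, as the paper implicitly does.
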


\begin{proof}
The proof is given by induction. 
For $n=2$, the triangular transition matrix $\mathbf{T}$ can be parameterised 
by one single parameter $p\in[0,1]$:
\begin{equation}
\mathbf{T}=\begin{pmatrix}1-p & p\\0&1\end{pmatrix}\, .
\end{equation}
It is straightforward to see that $\mathbf{T}=e^{\mathbf{Q}}$ with
\begin{equation}
\mathbf{Q}=\begin{pmatrix}\log(1-p)&-\log(1-p)\\0&0\end{pmatrix}\, .
\end{equation}
Since $\log(1-p)<0$, $\mathbf{Q}$ is indeed a generator matrix. 

We now consider an triangular transition matrix of arbitrary dimension
$n$ and treat the rightmost column separately, yielding
\begin{equation}
\mathbf{T}=\begin{pmatrix}\mathbf{A}&\underline{a}\\\ 
\underline{0}^{\top}&1\end{pmatrix}\, ,
\end{equation}
where $\mathbf{A}$ is an $(n-1)\times(n-1)$ triangular matrix, $\underline{a}$ is 
a column-vector with $n-1$ non-negative entries and $\underline{0}^{\top}$ is a 
row-vector of $n-1$ zeros. 
Since $\mathbf{T}$ is a transition matrix, for all $i=1,\dots,n-1$ one has
\begin{equation}
\sum_j A_{ij}=1-a_i\, .
\end{equation}
Introducing a $(n-1)$-dimensional triangular transition matrix 
$\mathbf{T}^\prime$ with entries $T^\prime_{ij}=\frac{A_{ij}}{1-a_i}$,
one reads
\begin{equation}
\mathbf{T}=\begin{pmatrix}\mathbf{I}-\text{diag}(\underline{a})
&\underline{a}\\\underline{0}^{\top}&1\end{pmatrix}\begin{pmatrix}\mathbf{T}^\prime&\underline{0}\\\ \underline{0}^{\top}&1\end{pmatrix}\, ,
\end{equation}
where $\text{diag}(\underline{a})$ is the $(n-1)$-dimensional diagonal
matrix with entries taken from vector $\underline{a}$.  
The first matrix above is embeddable since
\begin{equation}
\begin{pmatrix}
\mathbf{I}-\text{diag}(\underline{a}) & \underline{a}\\
\underline{0}^{\top}                  & 1
\end{pmatrix}
=\exp{
\begin{pmatrix}
\text{diag}(\log(\underline{1}-\underline{a})) &
-\log(\underline{1}-\underline{a}) \\
\underline{0}^{\top} & 0
\end{pmatrix}  
}
\end{equation}
and the second matrix can be further decomposed as
\begin{equation}
\mathbf{T} =
\begin{pmatrix}
\mathbf{I}-\text{diag}(\underline{a}) & \underline{a}\\
\underline{0}^{\top}                  & 1
\end{pmatrix}
\left (
\begin{array}{ccc}
\mathbf{I}-\text{diag}(\underline{b}) & \underline{b} & 0\\
\underline{0}^{\top}                  & 1             & 0\\
0                                     & 0             & 1
\end{array}
\right )
\left (
\begin{array}{ccc}
\mathbf{T}^\prime & \underline{0} & \underline{0}\\
0 & 1 & 0 \\
\underline{0}^{\top} & 0 & 1 
\end{array}
\right )
\, .
\end{equation}

Therefore, we arrive to a decomposition of the form 
$\mathbf{T}=e^{\mathbf{Q^\prime_1}}\cdots e^{\mathbf{Q^\prime_{n-1}}}$ 
for generator matrices 
$\mathbf{Q^\prime_1},\ldots,\mathbf{Q^\prime_{n-1}}$ with
\begin{equation}
\mathbf{Q_{1}}=\begin{pmatrix}\text{diag}(\log(\underline{1}-\underline{a}))&-\log(\underline{1}-\underline{a})\\\underline{0}^{\top}&0\end{pmatrix}
\end{equation}
and
\begin{equation}
\mathbf{Q_k} =\begin{pmatrix} \mathbf{Q^\prime_{k-1}}&\underline{0}\\\underline{0}^{\top}&0\end{pmatrix}
\end{equation}
for $k=2,\ldots,n-1$.
\end{proof}

One could implement Prop.~\ref{prop:append1} by finding a product
of $n$-dimensional square matrices $\prod_i \mathbf{A}^{(i)}$ 
where each matrix $\mathbf{A}^{(i)}$ has only one off-diagonal
non-zero element and if for matrix $\mathbf{A}^{(k)}$ one has $A^{(k)}_{ij}
\neq 0 $, then for all other matrices $\mathbf{A}^{(l)}$ ($l\neq k$) 
one has $A^{(l)}_{ij} = 0$.
If that product has $m = n(n-1)$ terms, we can solve 
$\prod_i \mathbf{A}^{(i)} = \mathbf{T}$ as a linear system of equations 
with $n$ equations and $n$ unknowns.
Having this, we define the following distance for the 
$\mathbf{A}$-factorization:
\begin{equation}
d_{S_2}= \min_k \{ \min_{i,j} \{ \tfrac{A^{k}_{ij}}{\sigma_{A^{k}_{ij}}} \}  \} \, ,
\label{distanceS2}
\end{equation} 
where $\sigma_{A^{n}_{ij}}$ is the dispersion associated with the
entry $A^{n}_{ij}$. 
If $d_{S_2}>2$, we statistically infer that a generator exists.
Notice that, it is possible to prove that the $LU$ decomposition is a 
particular case of the factorization in Eq.~(\ref{distanceS2}). 

One additional necessary condition that may be useful in some cases is the 
following one:
\begin{proposition}
An irreducible matrix $\mathbf{T}$,
i.e.~it cannot be placed into block upper-triangular form by
simultaneous row or column permutations,
is time-inhomogeneous embeddable only 
if, for at least in one row there is more than one non-zero
off-diagonal entry.
\label{necessary3}
\end{proposition}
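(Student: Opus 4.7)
The plan is to prove the contrapositive: if $\mathbf{T}$ is irreducible and every row has at most one non-zero off-diagonal entry, then $\mathbf{T}$ is not time-inhomogeneous embeddable (the interesting case being $n\ge 3$). Irreducibility rules out absorbing states, hence $T_{ii}=:a_i<1$ for every $i$, and the unique off-diagonal non-zero entry in row $i$ defines a map $\pi(i)\ne i$ with $T_{i,\pi(i)}=1-a_i$. Since every forward orbit of $\pi$ is closed under $\mathbf{T}$, irreducibility forces $\pi$ to be a single $n$-cycle, so $\mathbf{T}$ has the ``cycle shape'' $T_{ii}=a_i$, $T_{i,\pi(i)}=1-a_i$, and $T_{ij}=0$ otherwise. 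In particular, for $n\ge 3$ one has $T_{i,\pi^2(i)}=0$ because $\pi^2(i)\notin\{i,\pi(i)\}$.

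Assume for contradiction that $\mathbf{T}$ is embeddable via a generator $\mathbf{Q}(u)$ on $[0,\tau]$. For any $s\in(0,\tau)$, factor $\mathbf{T}=\mathbf{V}\mathbf{U}$, where $\mathbf{U}$ and $\mathbf{V}$ are the transition matrices on $[0,s]$ and $[s,\tau]$ respectively. The standard survival estimate $S_{ii}(s_1,s_2)\ge\exp\bigl(-\int_{s_1}^{s_2}|Q_{ii}(u)|\,du\bigr)$ ensures strict positivity of every diagonal entry of $\mathbf{U}$ and $\mathbf{V}$. Taking $k=i$ and $k=j$ separately in $T_{ij}=\sum_k V_{ik}U_{kj}=0$ for each non-cycle $j\notin\{i,\pi(i)\}$ then yields $V_{ii}U_{ij}=0$ and $V_{ij}U_{jj}=0$, so $U_{ij}=V_{ij}=0$: both factors inherit the cycle support of $\mathbf{T}$.

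Parameterize $u_i=U_{i,\pi(i)}$, $v_i=V_{i,\pi(i)}$. Using cycle support, the zero-entry condition $T_{i,\pi^2(i)}=0$ collapses to $v_i\,u_{\pi(i)}=0$ for every $i$, while the positivity $T_{i,\pi(i)}=(1-v_i)u_i+v_i(1-u_{\pi(i)})>0$ forces $v_i>0$ or $u_i>0$ for every $i$. Letting $A=\{i:v_i>0\}$ and $B=\{i:u_i>0\}$, these give $A\cup B=\{1,\dots,n\}$ and $\pi(A)\cap B=\emptyset$, whence $\pi(A)\subseteq B^c\subseteq A$; since $\pi$ is an $n$-cycle, only $A=\emptyset$ or $A=\{1,\dots,n\}$ are $\pi$-invariant, giving $\mathbf{V}=I$ or $\mathbf{U}=I$ respectively. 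Using that $\mathbf{S}(0,s)=I$ forces $\mathbf{Q}\equiv 0$ a.e.\ on $[0,s]$, the set $\{s:\mathbf{U}=I\}$ is downward closed in $(0,\tau)$ and, symmetrically, $\{s:\mathbf{V}=I\}$ is upward closed; their union covers $(0,\tau)$, so they must overlap, yielding $\mathbf{T}=\mathbf{V}\mathbf{U}=I$ and contradicting $a_i<1$.

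The main obstacle I anticipate is the support-propagation step, which hinges on every diagonal entry of $\mathbf{U}$ and $\mathbf{V}$ being strictly positive; the survival estimate above handles this cleanly for generators with integrable diagonal entries. The rest is a rigidity property of $n$-cycles under $\pi$-invariance, together with the standard continuity of the time-inhomogeneous flow.
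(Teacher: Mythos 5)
Your proof is correct (for $n\ge 3$, the only regime in which the statement can hold, since a $2\times 2$ row has only one off-diagonal position) and it takes a genuinely different route from the paper's. The paper argues in the forward direction: it writes $\mathbf{T}$ as a finite product of time-homogeneous embeddable factors $\mathbf{P}^{(k)}=\exp{\mathbf{Q}^{(k)}}$ and tracks how irreducibility and the zero pattern propagate through that product, invoking Prop.~\ref{ncondition1}c) to exhibit two positive entries. Your contrapositive argument instead observes that the hypothesis forces $\mathbf{T}$ to be supported on a single $n$-cycle $\pi$, splits the flow at an arbitrary intermediate time as $\mathbf{T}=\mathbf{V}\mathbf{U}$, uses the Gr\"onwall survival bound (the paper's Eq.~(\ref{eq_nec2})) to get strictly positive diagonals and hence push the zero pattern of $\mathbf{T}$ onto both factors, and then exploits the rigidity of $\pi$-invariant subsets of an $n$-cycle to force one factor to be the identity at every splitting time, whence $\mathbf{T}=\mathbf{I}$, contradicting irreducibility. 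What your route buys is rigour: the two-factor splitting always exists for an inhomogeneous generator, whereas the paper's finite factorization into homogeneous embeddable pieces is not established for a general inhomogeneous-embeddable matrix (Prop.~\ref{prop:append1} supplies such a factorization only for triangular matrices), so your argument is self-contained where the paper's is not. Two small points to tighten: the step ``they must overlap'' should appeal to continuity of $s\mapsto\mathbf{U}(s)$ and $s\mapsto\mathbf{V}(s)$ (two closed sets covering the connected interval either intersect or one is empty, and either alternative gives $\mathbf{T}=\mathbf{I}$); and the claim that $\mathbf{U}(s)=\mathbf{I}$ forces $\mathbf{Q}\equiv 0$ a.e.\ on $[0,s]$ deserves its one-line justification, namely $1=\Det{\mathbf{U}(s)}=\exp{\left(\int_0^s \Tr{\mathbf{Q}}\,du\right)}$ together with $\Tr{\mathbf{Q}}\le 0$ and the sign constraints on a generator's entries.
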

\begin{proof}
If $\mathbf{T}$ is time-inhomogeneous embeddable, then from
Prop.~\ref{prop:append1}, $\mathbf{T}$ can be written as a product $n$ of
embeddable matrices $\mathbf{P}^{(k)}=\exp{\mathbf{Q}^{(k)}}$. Assume,
without loss of generality that all matrices $\mathbf{P}^{(k)}$ are
time-homogeneous embeddable. 

Since no matrix $\mathbf{P}^{(i)}$ has no zeros in the diagonal
entries, from Props.~\ref{necessary1} and \ref{necessary2}, the
product of an irreducible matrix by an embeddable matrix is always
irreducible. 
Notice that if any of the matrices $\mathbf{P}^{(k)}$ is 
time-homogeneous embeddable, then from Prop.~\ref{ncondition1}c),
$\mathbf{T}$ will have no zero entries. 

Let us consider $\mathbf{P}^{(k)}$ such that the product
$\mathbf{P}^{(1)} \dots \mathbf{P}^{(k)}$ is irreducible but
$\mathbf{P}^{(1)} \dots \mathbf{P}^{(k-1)}$ is not. 
Since we assume, without loss of generality that $\mathbf{P}^{(k)}$ is
not the identity matrix, $P^{(k)}_{ij}>0$ for at least one $j\neq i$.  
Then, for $m\leq k$ there is one $l$ for which $P_{li}^{(m)}>0$.  
Thus $T_{ij}> 0 $ and  $T_{lj}> 0 $. 
\end{proof}

Proposition \ref{necessary3} is not a condition we can evaluate 
for empirical systems. 
Nonetheless it might be useful if one has some apriori knowledge 
about the dynamics of the system. 

Another sufficient condition for time-inhomogeneous generators concerns
situations when the matrices have non-negative entries:
\begin{proposition}
Totally non-negative transition matrices, i.e.~matrices
$\mathbf{T}(t)$ for which all submatrices have positive determinant,
have an inhomogeneous generator $\mathbf{Q}(t)$. 
\label{totally-non-negative}
\end{proposition}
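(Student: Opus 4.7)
The plan is to combine the classical structural theorem for totally non-negative matrices with the already-established Proposition~\ref{LU}. The key external fact I will invoke is the classical result (due to Cryer, building on Whitney, Loewner and Schoenberg) stating that every square totally non-negative matrix $\mathbf{T}$ admits a factorization $\mathbf{T}=\mathbf{L}\mathbf{U}$ in which $\mathbf{L}$ is lower-triangular totally non-negative and $\mathbf{U}$ is upper-triangular totally non-negative; in particular, both $\mathbf{L}$ and $\mathbf{U}$ have all entries non-negative. Under the stronger hypothesis of the proposition (all submatrix determinants strictly positive, i.e.\ $\mathbf{T}$ totally positive), the leading principal minors are non-zero, so Gaussian elimination proceeds without pivoting and the LU factors exist and are unique.

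Given such a factorization, the conclusion is immediate: Proposition~\ref{LU} asserts that any transition matrix admitting an LU decomposition with non-negative $\mathbf{L}$ and $\mathbf{U}$ possesses a time-inhomogeneous generator $\mathbf{Q}(t)$. So the entire argument reduces to producing the LU decomposition with non-negative factors from the total non-negativity assumption.

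To make the proof self-contained, I would give the structural step by induction on the dimension using the Schur complement. In the inductive step, because $T_{11}>0$ and $\mathbf{T}$ is totally positive, Sylvester's determinant identity (applied to the strictly positive minors of $\mathbf{T}$) shows that the Schur complement $\mathbf{T}/T_{11}$ is itself totally non-negative of one smaller dimension. By the induction hypothesis it factors as $\mathbf{L}'\mathbf{U}'$ with non-negative triangular factors, and these reassemble into a global LU factorization of $\mathbf{T}$ in which the first column of $\mathbf{L}$ is $\mathbf{T}_{\cdot,1}/T_{11}$ (non-negative) and the first row of $\mathbf{U}$ is $\mathbf{T}_{1,\cdot}$ (non-negative), while the remaining blocks come from $\mathbf{L}'$ and $\mathbf{U}'$.

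The main obstacle is verifying the inheritance of non-negativity by the LU factors; this is precisely the content of the classical totally non-negative LU theorem and is not completely elementary. The cleanest handling is either to cite that classical result directly or, as sketched above, to run the Schur-complement induction and justify that the Schur complement of a totally positive matrix is totally non-negative via Sylvester's identity. Once this structural step is in place the remainder of the proof is an immediate invocation of Proposition~\ref{LU}, with no further calculation required.
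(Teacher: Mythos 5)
Your proposal is correct and follows essentially the same route as the paper: the paper likewise cites Cryer's theorem that the LU factorization of a totally non-negative matrix has totally non-negative (hence entrywise non-negative) triangular factors, and then concludes via Proposition~\ref{LU}. Your additional Schur-complement induction is a self-contained justification of the cited classical fact that the paper does not supply, but it does not change the structure of the argument.
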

\begin{proof}
It was proved \cite{cryer1973lu} that the LU factorization of any
totally non-negative matrix is composed by a totally non-negative
lower diagonal matrix $\mathbf{L}$ and a totally non-negative upper
diagonal  $\mathbf{U}$ . If a matrix is totally non-negative, then it
has only non-negative elements, thus in particular $\mathbf{L}$ and
$\mathbf{U}$ are matrices with non-negative elements.  
\end{proof}
\bibliographystyle{apsrev4-1}
\bibliography{MatrixBib.bib}

\end{document}